\newif\ifarxiv
\newcommand{\arxivTHENgd}[2]{\ifarxiv#1\else#2\fi}
\newcommand{\udr}{\textsf{UDR}\xspace}
\newcommand{\udc}{\textsf{UDC}\xspace}
\newcommand{\wudc}{weak \textsf{UDC}\xspace}
\newcommand{\udrs}{\textsf{UDR}s\xspace}
\newcommand{\udcs}{\textsf{UDC}s\xspace}
\newcommand{\wudcs}{weak \textsf{UDC}s\xspace}
\newcommand{\old}[1]{{}}
\newcommand*{\Parallelogramm}[1][]{
  \pgfpicture\pgfsetroundjoin
    \pgftransformxslant{.6}
    \pgfpathrectangle{\pgfpointorigin}{\pgfpoint{.60em}{.65em}}
    \pgfusepath{stroke,#1}
  \endpgfpicture}
\crefname{question}{question}{questions}
\Crefname{question}{Question}{Questions}
\crefname{lemma}{Lemma}{Lemmas}
\Crefname{lemma}{Lemma}{Lemmas}
\crefname{observation}{Observation}{Observation}
\Crefname{observation}{Observation}{Observation}
\title{
Unit Disk Representations of Embedded Trees, Outerplanar and Multi-Legged Graphs 
}
\titlerunning{Unit Disk Representations}
\author{Sujoy Bhore\inst{1}\orcidID{0000-0003-0104-1659} \and
	Maarten L\"offler\inst{2} \and
	Soeren Nickel\inst{3}\orcidID{0000-0001-5161-3841} 
	\and Martin N\"ollenburg\inst{3}\orcidID{0000-0003-0454-3937}}
\authorrunning{S. Bhore et al.}
\institute{Indian Institute of Science Education and Research, Bhopal, India\\
	\email{sujoy.bhore@gmail.com}
	\and Department of Computing and Information Sciences, Utrecht University\\
	\email{m.loffler@uu.nl}
	\and Algorithms and Complexity Group, TU Wien, Vienna, Austria\\
	\email{[soeren.nickel|noellenburg]@ac.tuwien.ac.at }}
\begin{document}

\maketitle

\begin{abstract}
    A unit disk intersection representation (\udr) of a graph $G$ represents each vertex of $G$ as a unit disk in the plane, such that two disks intersect if and only if their vertices are adjacent in $G$.
    A \udr with interior-disjoint disks is called a unit disk contact representation (\udc).
    We prove that it is \NP-hard to decide if an outerplanar graph or an embedded tree admits a \udr. We further provide a linear-time decidable characterization of caterpillar graphs that admit a \udr.
    Finally we show that 
    it can be decided in linear time if a lobster graph admits a \emph{weak}
    \udc, which permits intersections between disks of non-adjacent vertices.
    
\end{abstract}

\section{Introduction}\label{sec:intro}
The representation of graphs as contacts or intersections of disks has been a major topic of investigation in geometric graph theory and graph drawing. 
The famous circle packing theorem states that every planar graph has a contact representation by touching disks (of various size) and vice versa~\cite{koebe1936kontaktprobleme}. 
Since then, a large body of research has been devoted to the 
representation of planar graphs as contacts or intersections of various kinds of geometric objects~\cite{chalopin2010planar,chaplick2012planar,felsner2013rectangle,gonccalves2018planar}.
In this paper, we are interested in unit-radius disks.
A set of unit disks in $\mathbb{R}^2$ is a \emph{unit disk intersection representation} (\udr) of a graph $G=(V,E)$, if there is a bijection between~$V$ and the set of unit disks such that two disks intersect if and only if they are adjacent in $G$. 
\emph{Unit disk graphs} are graphs that admit a \udr.
\emph{Unit disk contact graphs} (also known as \emph{penny graphs}) are the subfamily of unit disk graphs that have a \udr with interior-disjoint disks, which is thus called a \emph{unit disk contact representation} (\udc). This can be relaxed to \emph{weak} \udcs, which permit contact between non-adjacent disks; see \cref{fig:intro} for examples.

\begin{figure}
	\begin{minipage}[t]{.3\linewidth}
		\centering
		\includegraphics[page=1]{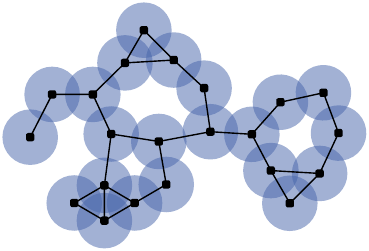}
		\subcaption{Outerplanar graph}
		\label{fig:intro_a}
	\end{minipage}
	\quad
	\begin{minipage}[t]{.3\linewidth}
		\centering
		\includegraphics[page=2]{figures/intro_figure.pdf}
		\subcaption{Caterpillar graph}
		\label{fig:intro_b}
	\end{minipage}
	\quad
	\begin{minipage}[t]{.3\linewidth}
		\centering
		\includegraphics[page=3]{figures/intro_figure.pdf}
		\subcaption{Lobster graph}
		\label{fig:intro_c}
	\end{minipage}
	\caption{
	    We investigate specific contact and intersection graphs of unit disks.
		In a \udr (a-b) disks are allowed to overlap, and contact of two disks implies an edge between their vertices.
		In a \wudc the disks are interior disjoint, but contact between non-adjacent disks is allowed.
		The disks of backbone vertices are colored grey (b-c).
	}
	\label{fig:intro}
\end{figure}

The recognition problem, where the objective is to decide  whether a given graph admits a \udr, has a rich history~\cite{bdlrst-rscplrudct-15,breu1995complexity,hlinveny1998classes,hlinveny2001representing}. 
Breu and Kirkpatrick~\cite{bk-udgrn-98} proved that it is \textsf{NP}-hard to decide whether a graph $G$ admits a \udr or a \udc, even for planar graphs.
Klemz et al.~\cite{knp-rwdcg-15a} showed that recognizing outerplanar unit disk contact graphs is already \textsf{NP}-hard, but it is decidable in linear time for caterpillars, i.e., trees whose internal vertices form a path (see \cref{fig:intro_b}). 

Recognition with a fixed embedding is an important variant of the recognition problem. Given a plane graph $G$, the objective in this problem is to decide whether $G$ admits a \udc in the plane that preserves the cyclic order of the neighbors at each vertex. Some recent works investigated the recognition problem of \textsf{UDCs} with/without fixed embedding, and narrowed down the precise boundary between hardness and tractability; see~\cite{bdlrst-rscplrudct-15,ccn-recwwudcrn-19,cleve-20,knp-rwdcg-15a}. A remaining open question is to settle the complexity of recognizing non-embedded trees that admit a \udc. Some of these works focused on \emph{weak} \textsf{UDCs}, where disks of non-adjacent vertices may touch. In  this model, the recognition of non-embedded trees that admit a weak \udc is \NP-hard~\cite{cleve-20}. We summarize the results on \wudcs in  Table~\ref{tab:results_state_open}.  

While several results of the past years have shed  light into the recognition complexity gap for \udcs, not much is known in this regard for the more general class of \udrs since the \NP-hardness  for planar graphs from 1998~\cite{bk-udgrn-98}. 
\paragraph{Our Contribution.}
We investigate the unit disk graph recognition problem for subclasses of planar graphs.
We show that recognizing unit disk graphs remains \textsf{NP}-hard for non-embedded outerplanar graphs (see \cref{fig:intro_a}) -- strengthening the previous hardness result for planar graphs~\cite{bk-udgrn-98} -- and for embedded trees (Section~\ref{hardness-section}).
This line of research aims to extend earlier investigations~\cite{bdlrst-rscplrudct-15,ccn-recwwudcrn-19,cleve-20} of \textsf{UDC}s to the \textsf{UDR} model and builds in particular on the work of Bowen et al.~\cite{bdlrst-rscplrudct-15}.

On the positive side, we provide a linear-time algorithm to recognize caterpillar graphs (see \cref{fig:intro_b}) that admit a \udr (Section~\ref{algo-section:caterpillar}). 
In Section~\ref{algo-section:lobster}, we return to the problem of recognizing unit disk contact graphs and extend the tractability boundary for non-embedded graphs. While it was known that a weak \udc for caterpillar graphs can be constructed in linear time (if one exists), but the same recognition problem is \NP-hard for trees~\cite{cleve-20}, we prove that we can decide in linear time 
if a lobster graph admits a weak \udc on the triangular grid, where a \emph{lobster} is a tree whose internal vertices form a caterpillar (see \cref{fig:intro_c}).
\cref{tab:results_state_open} summarizes our results and remaining open problems.
Proofs and details of results marked with a star ($\star$) can be found \arxivTHENgd{in the appendix}{in the complete version~\cite{bhore2021recognition}}.

\begin{table}[tbp]
	\begin{center}
		\begin{tabular}{lcccc}
			\toprule
			
		    graph class & \multicolumn{2}{c}{\wudc} & \multicolumn{2}{c}{\udr}\\
			 & non-embedded & embedded & non-embedded & embedded \\
			\midrule
			planar & $\uparrow$ \NP-hard & $\uparrow$ \NP-hard  &\NP-hard \cite{bk-udgrn-98}  &  $\uparrow$ \NP-hard\\
			 outerplanar & $\uparrow$ \NP-hard & $\uparrow$ \NP-hard &\NP-hard (Thm.\ref{thm:outer_hardness})  &  $\uparrow$ \NP-hard  \\
			 trees & \NP-hard \cite{cleve-20} & $\uparrow$ \NP-hard  & open & \NP-hard (Thm.\ref{thm:tree_hardness})\\
			 lobsters & $O(n)$ (Thm.\ref{thm:dyn_prog_lobsters}) & $\uparrow$ \NP-hard & open & open \\
			 caterpillars & $O(n)$ \cite{cleve-20} & \NP-hard \cite{ccn-recwwudcrn-19} &$O(n)$ (Thm.\ref{algo-main-th}) & open \\
			\bottomrule
		\end{tabular}
		
	\end{center}
	\caption{State of the art, our results, and open problems on unit disk graph recognition. Upward arrows indicate, that a result follows from the one below.}
	\label{tab:results_state_open}
\end{table}

\section{Preliminaries}\label{app:prelim}
A graph $G = (V, E)$ with $V = \{v_1, \dots, v_n\}$ is a unit disk graph if there exists a set of closed unit disks $\mathcal{D} = \{d_1, \dots, d_n\}$ and a bijective mapping $d \colon V \rightarrow \mathcal{D}$, s.t., $d(v_i) = d_i$ and $v_iv_j \in E$ if and only if $d_i$ and $d_j$ intersect.
We call $\mathcal D$ a \emph{unit disk intersection representation} (\udr) of $G$.
If all disks in $\mathcal{D}$ are pairwise interior disjoint we also call $\mathcal D$ a \emph{unit disk contact representation} (\udc) of $G$.
A graph is a \emph{unit disk contact graph} if it admits a \udc.
A \emph{weak} \udc permits contact between two disks $d_i$ and $d_j$, even if $v_iv_j \not \in E$.\footnote{Note that weak \udrs, in contrast, are not interesting, since a complete graph $K_n$ can easily be represented as a \udr and therefore every graph admits a weak \udr.}
A \emph{caterpillar} graph $G$ is a tree, which yields a path, when all leaves are removed.
We call this path, the backbone $B_G$ of $G$.
Similarly a \emph{lobster} graph $G'$ is a tree, which yields a caterpillar graph $G''$, when all leaves are removed.
The backbone of $G'$ is the backbone of $G''$.
For each vertex $v$ of the backbone,  we denote the set of vertices that are reachable from $v$ on a path that does not include any other backbone vertex, as the \emph{descendants} of $v$. 

The set of disks $\mathcal{D}$ induces an embedding $\mathcal{E}_\mathcal{D}(G)$ of $G$ by placing every vertex $v$ at the center of $d(v)$ and linking neighboring vertices by straight-line edges.
We will therefore also use $v$ as the center of $d(v)$.
We say that a \udr $\mathcal{D}$ preserves the embedding of an embedded graph $G$ with embedding $\mathcal{E}(G)$ if $\mathcal{E}_\mathcal{D}(G) = \mathcal{E}(G)$.
Let $a,b,c$ be three points in $\mathbb R^2$.
We use $\measuredangle abc$ to denote the clockwise angle defined between the segments ${ab}$ and ${bc}$. 
We define the angle $\measuredangle d_id_jd_k$ as the clockwise angle $\measuredangle v_iv_jv_k$ in $\mathcal{E}_\mathcal{D}(G)$.

\section{\NP-Hardness Results}\label{hardness-section}

\begin{figure}[tb]
	\centering
	\begin{minipage}[t]{.4\linewidth}
		\centering
		\includegraphics[width=\linewidth]{figures/BowenAuxilliary.pdf}
		\subcaption{Auxiliary hexagonal grid structure with embedded incidence graph}
		\label{fig:bowen_auxilliary_a}
	\end{minipage}
	\hfill
	\begin{minipage}[t]{.55\linewidth}
		\centering
		\includegraphics[page=1, width=\linewidth]{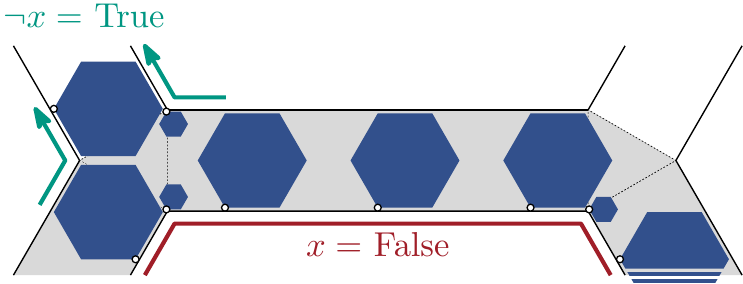}
		\subcaption{False variable $x$. The connected wire on the top left is in a true state and therefore corresponds to the negated literal $\neg x$.}
		\label{fig:bowen_auxilliary_b}
	\end{minipage}
	\begin{minipage}[t]{.4\linewidth}
		\centering
		\includegraphics[page=3, width=\linewidth]{figures/BowenFill.pdf}
		\subcaption{Unsatisfied state (left) and one possible satisfied state (right) of the clause gadget}
		\label{fig:bowen_auxilliary_c}
	\end{minipage}
	\hfill
	\begin{minipage}[t]{.55\linewidth}
		\centering
		\includegraphics[page=2, width=\linewidth]{figures/BowenFill.pdf}
		\subcaption{True variable $x$. The wire is in a false state. The dotted hexagon indicates the overlap, if hexagons would have inconsistent states.}
		\label{fig:bowen_auxilliary_d}
	\end{minipage}
	\caption{
		Auxiliary structure details used by Bowen et al.~\cite{bdlrst-rscplrudct-15}. All Figures are recreations/adaptions from their paper.
		The incidence graph is embedded on a hexagonal grid (a).
		The edges are short corridors in which the blue hexagons are fitted, hinged at white vertices.
		Hexagons in the variable cycle (red line, grey backdrop) have two states (b) and (d).
		The clause gadget (c) requires one hexagon, which does not enter the junction.
	}
	\label{fig:bowen_auxilliary}
\end{figure}

In this section, we prove that recognizing unit disk graphs remains \NP-hard for non-embedded outerplanar graphs and for embedded trees. Our proofs apply the generic machinery of Bowen et al.~\cite{bdlrst-rscplrudct-15} to decide realizability of polygonal linkages, which requires to construct gadgets that can model hexagons and rhombi in a stable way. First we sketch the reduction of Bowen et al.\ (full details in
\arxivTHENgd{Appendices~\ref{sec:poly_linkage-appendix}--\ref{sec:approx_construction-appendix}}{Appendices~A.1-A.3 of the full version~\cite{bhore2021recognition}}),
then we describe our constructions of the required stable structures with outerplanar (Section~\ref{sec:outer}) and embedded tree gadgets (Section~\ref{sec:trees}).

Bowen et al.~\cite{bdlrst-rscplrudct-15} proved that recognizing unit disk contact graphs is \textsf{NP}-hard for embedded trees, via a reduction from planar 3-SAT, which uses an auxiliary construction formulated as a realization of a polygonal linkage.
A polygonal linkage is a set of polygons, which are realizable if they can be placed in the plane, s.t., predefined sets of points on the boundary of these polygons are identified.
Bowen et al.\ define a set of hexagons in a hexagonal tiling (\cref{fig:bowen_auxilliary_a}) with small gaps between them, which form a hexagonal grid, in which a representation of the incidence graph of the planar 3-SAT instance is fitted.
Smaller hexagons are fitted into cycles in this grid, s.t., they admit only two different realizations, see \cref{fig:bowen_auxilliary_b}, and determine the state of neighboring small hexagons, see \cref{fig:bowen_auxilliary_d}. The cycles represent variables in a true or false state.
The states of the cycles are transmitted via chains of smaller hexagons into the gaps.
The vertex, where three such chains meet, contains a small hexagon on a thin connection, which can only be realized if at least one transmitted state is \textit{true}, see \cref{fig:bowen_auxilliary_c}. The polygonal linkage is realizable if and only if the planar 3-SAT instance is satisfiable. For a detailed description, we refer to \arxivTHENgd{\cref{hardness-appendix}}{Appendix A of the full version~\cite{bhore2021recognition}} and Bowen et al.~\cite{bdlrst-rscplrudct-15}.
The building blocks of this reduction are hexagons of variable sizes and short segments.
Bowen et al.~\cite{bdlrst-rscplrudct-15} approximate the hexagons by creating graphs, whose \textsf{UDC}s must be within a constant (asymmetric) Hausdorff distance\footnote{Recall that the asymmetric Hausdorff distance from a set $A$ to a set $B$ in a metric space with metric $d$ is defined as $d_H(A,B) = \sup_{a \in A} \inf_{b \in B} d(a,b)$.} 
of a hexagon and the segments similarly by providing graphs, whose \textsf{UDC}s must be within a constant (asymmetric) Hausdorff distance of long thin rhombi. 
We extend their notion of $\lambda$-stable approximations to \textsf{UDR}s.
\begin{definition}
	A graph $G$ is a $\lambda$-stable approximation of a region $P$ in the plane if, in every \textsf{UDR} of $G$, there exists a congruence transformation $f:\mathbb{R}^2\rightarrow \mathbb{R}^2$ such that the union $U$ of all unit disks in the \textsf{UDR} has an asymmetric Hausdorff distance $d_H(f(P),U) \le \lambda$.
\end{definition}

\subsection{Non-embedded Outerplanar Graphs}
\label{sec:outer}

We prove that recognition of unit disk graphs is hard for non-embedded outerplanar graphs by providing outerplanar graphs $G^H_k$ and $G^R_k$, which are $7$-stable approximations of a hexagon of side length $2k-1$ and a rhombus of width $2k+6$, respectively.
Then, the \NP-hardness follows immediately from the construction of Bowen et al.~\cite{bdlrst-rscplrudct-15} sketched above.
To obtain the $7$-stable approximations we present two graphs, which enforce local bends in one direction of at least $\pi$ and $\frac{4\pi}{3}$ in any \textsf{UDR}.

\begin{figure}
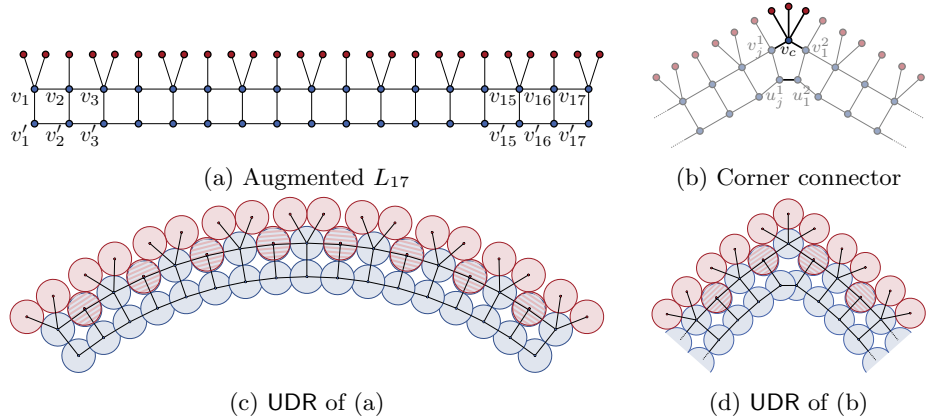

	\centering	
	\begin{minipage}[t]{.65\linewidth}
		\centering
		\includegraphics[page=2,width=\linewidth]{figures/red_outer_hexagon.pdf}
		\subcaption{Augmented $L_{17}$}
		\label{fig:red_outer_graph_a}
	\end{minipage}
	\hfill
	\begin{minipage}[t]{.3\linewidth}
		\centering
		\includegraphics[page=1,width=\linewidth]{figures/red_outer_hexagon.pdf}
		\subcaption{Corner connector}
		\label{fig:red_outer_graph_b}
	\end{minipage}
	
	\begin{minipage}[t]{.65\linewidth}
		\centering
		\includegraphics[page=4,width=\linewidth]{figures/red_outer_hexagon.pdf}
		\subcaption{\udr of (a)}
		\label{fig:red_outer_graph_c}
	\end{minipage}
	\hfill
	\begin{minipage}[t]{.3\linewidth}
		\centering
		\includegraphics[page=3,width=\linewidth]{figures/red_outer_hexagon.pdf}
		\subcaption{\udr of (b)}
		\label{fig:red_outer_graph_d}
	\end{minipage}
	\caption{
		Components for creating $\lambda$-stable approximations $G^H_k$ and $G^R_k$.
		Ladder $L_{17}$ (a) and its \udr (c), as well as a corner connector (b)  and its \udr (d).
		The corner connector connects two ladders (lighter colored parts) and its actual parts have a darker color in (b). 
		The bends in the \udrs are required but exaggerated.
	}
	\label{fig:red_outer_graph}
\end{figure}

A \emph{ladder} $L_k$ (see \cref{fig:red_outer_graph_a,fig:red_outer_graph_c}) is a chain of pairwise connected vertices $v_i$ and $v_i'$, for $i=1, \dots, k$, also called the \textit{outer} and \textit{inner} vertices of $L_k$, respectively.
Additionally, so-called \textit{extension neighbors}, which are connected to only one outer vertex each, are added on one side of the ladder, s.t., the outer vertices have alternating degrees of four and five.
Since the ladder consists of a chain of $C_4$'s, these neighbors are forced to be placed all on the outside in an outerplanar embedding.
The minimal height of such a ladder is $2\sqrt{3}+2-\varepsilon$, which is the height of the smallest bounding box of a tight packing of three rows of unit disks minus an (arbitrarily) small constant $\varepsilon>0$.

The permission of overlap between adjacent disks allows for a placement of one extension neighbor almost on top of its adjacent outer vertex, which leads to an ever so slight inwards bend and, more importantly, any outwards bend is impossible.
In order to enforce an inward bend of at least $\frac{4\pi}{3}$, we connect the last inner vertex $u_j^1$ of one ladder with the first inner vertex $u_1^2$ of a second ladder and the last and first outer vertex $v_j^1$/$v_1^2$ of the first and second ladder, respectively, both with a vertex $v_c$, which has three attached extension neighbors, see \cref{fig:red_outer_graph_b}.
This construction is called a \emph{corner connector}.
Since it is impossible to place the disk of any extension neighbor of $v_c$ inside the 5-cycle $d(v_j^1), d(u_j^1), d(u_1^2), d(v_1^2), d(v_c)$, without overlapping at least two disks in the \textsf{UDR}, all extension neighbors are still forced to the outside and therefore $\measuredangle v_j^1v_cv_1^2 > \frac{4\pi}{3}$, see \cref{fig:red_outer_graph_d}.

Placing two ladders opposite each other and connecting them on one end with three corner connectors as shown in \cref{fig:red_outer_rhombus}, yields a $7$-stable approximation $G^R_k$ of a thin rhombus. 
The following lemma is analogue to Lemma 10 in~\cite{bdlrst-rscplrudct-15}.

\begin{restatable}[$\star$]{lemma}{rhombusouter}\label{lem:rhombus_approx}
	For every positive integer $k$ the outerplanar graph $G^R_k$ in \cref{fig:red_outer_rhombus} is a 7-stable approximation of a rhombus of width $2k+6$ and height $6\sqrt{3}+2$.
\end{restatable}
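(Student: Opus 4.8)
The plan is to follow the proof of Lemma~10 in~\cite{bdlrst-rscplrudct-15}, adapting it from the \udc model to \udrs and from tree gadgets to the outerplanar ladder/corner-connector construction. Fix an arbitrary \udr $\mathcal{D}$ of $G^R_k$ and let $U$ denote the union of its disks. We must produce a congruence transformation $f\colon\mathbb{R}^2\to\mathbb{R}^2$ so that the rhombus $P$ of width $2k+6$ and height $6\sqrt{3}+2$ shown in \cref{fig:red_outer_rhombus} satisfies $d_H(f(P),U)\le 7$; since the Hausdorff distance in question is asymmetric, it suffices to show that every point of $f(P)$ lies within distance $7$ of $U$, and disks of $\mathcal{D}$ may protrude arbitrarily far beyond $f(P)$ without harm.

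First I would establish that in every \udr each of the two ladders $L_k$ is pinned, up to an additive constant independent of $k$, to a straight thin rectangular strip of length roughly $2k$ and width $2\sqrt3+2\pm O(1)$; this is where the bulk of the argument lies and where the \udr model departs from the \udc setting of~\cite{bdlrst-rscplrudct-15}. The available tools are: inside each $C_4$ of the ladder the two ``diagonals'' are non-edges and so have length more than $2$, while all four sides are edges and so have length at most $2$, which makes every $C_4$ a fat quadrilateral of bounded diameter; the non-edges $v_iv_{i+2}$ and $v_i'v_{i+2}'$ likewise force both the inner and the outer path to have length linear in $k$; every outer vertex carrying extension neighbors has all its neighbors pairwise non-adjacent, hence pairwise more than $2$ apart and all within $2$ of the vertex, which pins the local configuration and (by the packing argument already given) forbids any outward bend while permitting only an ``ever so slight'' inward play per rung; and the lower bound $2\sqrt3+2-\varepsilon$ on the ladder height recalled above pins the strip width. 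Assembling these into a bound on the ladder's total deviation from a straight strip that does \emph{not} grow with $k$ --- in particular showing that the small per-rung inward play cannot add up, e.g.\ because a coherent inward curve would make the outer path too long relative to the inner path, both of which carry $k$ vertices under the same per-rung length cap --- is the crux, and the place where one must account for the extra freedom that overlapping adjacent disks allow. Since $L_k$ has an essentially unique planar embedding up to reflection, the embedding induced by $\mathcal{D}$ adds no further combinatorial freedom.

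Next I would use the three corner connectors to fix the relative pose of the two ladders. Each corner connector forces $\measuredangle v_j^1 v_c v_1^2 > \frac{4\pi}{3}$ (established above) while holding the two joined ladder ends within $O(1)$ of $v_c$; equivalently it turns the traversed ladder path by roughly $\frac{\pi}{3}$, so the three of them jointly perform a U-turn of about $\pi$. Hence the whole skeleton of $G^R_k$ is forced into two anti-parallel strips of length $2k+O(1)$ joined at one end, a region of length $2k+O(1)$ and height $6\sqrt3+2\pm O(1)$ (the $6\sqrt3+2$ arising from the two $2\sqrt3+2$ strip widths together with the bounded slack of the three connectors). I then take $f$ so that the closed acute vertex of $P$ sits at the connector U-turn, the long axis of $P$ runs along the common center line of the two strips, and --- using that $2k+6$ exceeds $2k+O(1)$ --- the far acute vertex of $P$ reaches just beyond the blunt end of the structure. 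Every point of $f(P)$ is then within a constant of $U$: the extremal cases are a point near the far tip of $P$, which sticks out at most $O(1)$ past the strip ends, and a point on the long boundary of $P$ at its widest, which lies within $(6\sqrt3+2)/2 = 3\sqrt3+1$ of one of the two strips.

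The main obstacle is thus the rigidity claim of the second paragraph: that the latitude of the \udr model --- adjacent disks may overlap arbitrarily --- cannot be exploited to curl or shorten a ladder beyond an additive constant. I expect most of the work to lie in quantifying the inward play freed when an extension disk is pushed almost onto its outer vertex, showing it amounts to only a narrow wedge per rung, and in turning the ``the outer path cannot be much longer than the inner path'' intuition into an honest bound under arbitrary overlaps; once that is in hand, tracking the constants from the ladders and the three corner connectors through the distance estimates of the third paragraph so that the final figure is provably $7$ (and not merely $O(1)$) is the remaining bookkeeping.
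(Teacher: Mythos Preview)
Your plan diverges from the paper at its central step. You propose to prove that \emph{each ladder individually} is pinned to a straight strip up to additive $O(1)$, and you correctly flag that bounding the accumulated inward play is the crux. The paper never attempts this. Instead it exploits that the two ladders $L_k$ and $L_k'$ are entirely non-adjacent to one another, so in any \udr their disk sets are pairwise disjoint. That mutual disjointness, combined with the minimum-height bound for a single ladder (Lemma~\ref{lem:ladder_height}) and the no-outward-bend Lemmas~\ref{lem:bend_outer} and~\ref{lem:bend_outer_corner}, is what prevents the structure from collapsing inward: one ladder cannot curl past the other. Your path-length comparison (``a coherent inward curve would make the outer path too long relative to the inner path'') is fragile in the \udr model, since adjacent disks may overlap fully and so the inner path can be arbitrarily short subject only to second-neighbor non-adjacencies; turning that into a global curl bound is substantially harder than what the paper actually does.

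The Hausdorff argument in the paper is correspondingly simpler than yours. Rather than arguing about two antiparallel strips and then covering $f(P)$ piecewise, the paper fixes a line $\overline{c_0l_0}$ through the structure (the eventual long diagonal of the rhombus), observes that every point of the rhombus has vertical distance at most $3\sqrt{3}+1$ from that line, and shows that at every horizontal position along $\overline{c_0l_0}$ the union $U$ has points both on-or-above and on-or-below the line. The latter follows because the two non-overlapping ladders together with their extension neighbors force a combined height of at least $3\sqrt{3}+2$ (four rows of unit disks). Hence every point of the rhombus is within $3\sqrt{3}+1<7$ of $U$, and the constant $7$ drops out directly rather than needing to be tracked through your bookkeeping.
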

\begin{proof}[Sketch]
	The graph $G^R_k$ is made up of components, which make any bend to the outside impossible, see \cref{fig:red_outer_rhombus_a}.
	Since two ladders need to be overlap free, the minimum height of the ladders guarantees that part of the boundary of the union over all disks in the \textsf{UDR} of $G^R_k$ lies above and below the line $\overline{c_0l_0}$, see \cref{fig:red_outer_rhombus_b}.
	The largest vertical distance of the rhombus to this line is $3\sqrt{3} + 1 \approx 6.196 < 7$.
\end{proof}

\begin{figure}
	\centering
	\begin{minipage}[t]{0.5\linewidth}
		\centering
		\includegraphics[page=3, width=\linewidth]{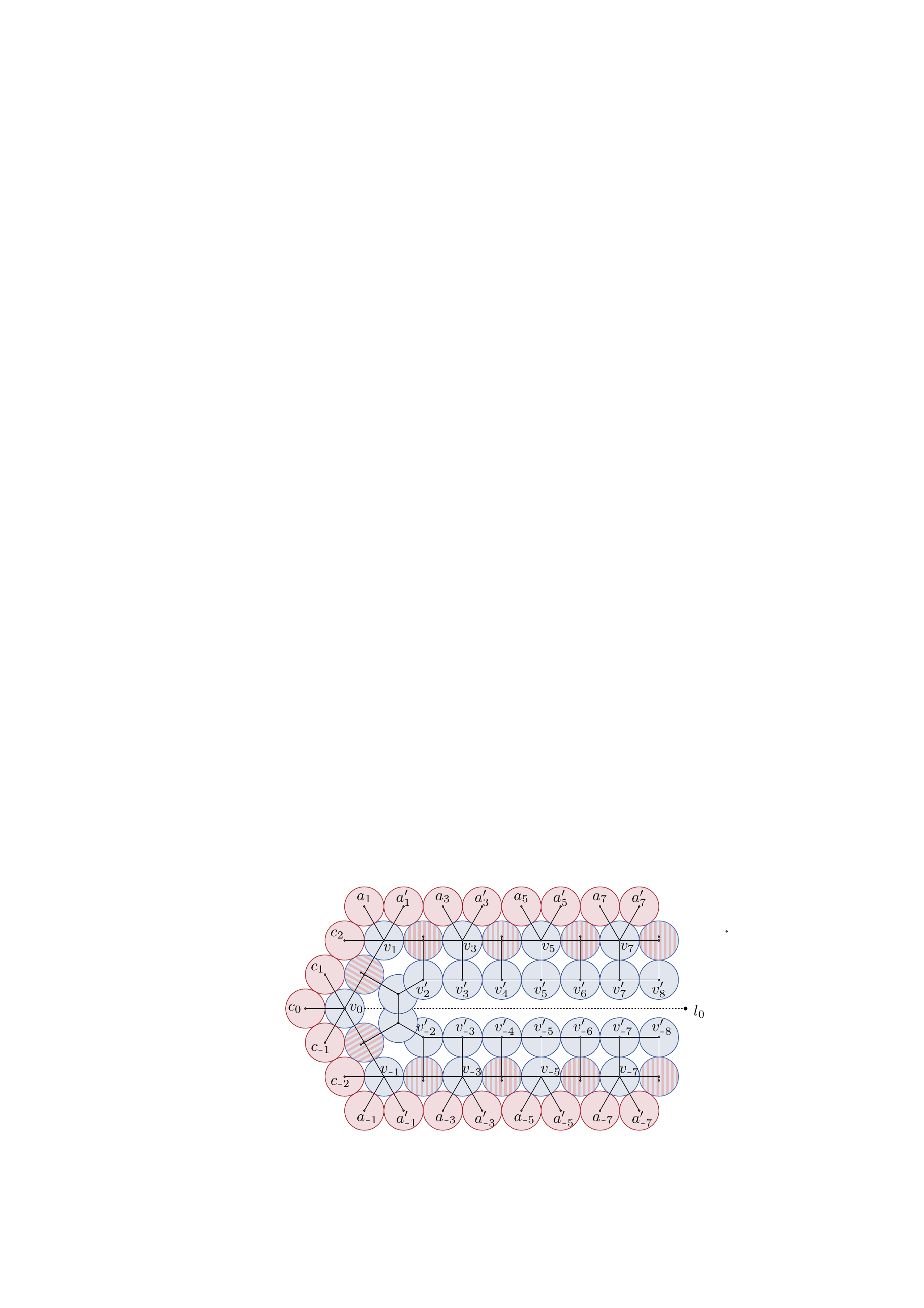}
		\subcaption{Maximal distance to $\overline{c_0l_0}$}
		\label{fig:red_outer_rhombus_a}
	\end{minipage}
	\hfill
	\begin{minipage}[t]{0.42\linewidth}
		\centering
		\includegraphics[page=4, width=\linewidth]{figures/red_outer_rhombus.pdf}
		\subcaption{\udr on both sides of $\overline{c_0l_0}$}
		\label{fig:red_outer_rhombus_b}
	\end{minipage}
	\caption{A $7$-stable approximation $G^R_7$ of a rhombus superimposed on its \udr. The maximal distance of any point of the \udr to $\overline{c_0l_0}$ is smaller or equal 7 (a) and at all points of $\overline{c_0l_0}$ a part of the \udr lies above and below $\overline{c_0l_0}$ (b).
		Both \udrs require a small inward bend to be valid and hatched disks indicate almost overlapping placement of a red disk on a blue disk, with a small shift to the outer side.
		Both inward bends and outward shifts are omitted.
	}
	\label{fig:red_outer_rhombus}
\end{figure}

\begin{restatable}[$\star$]{lemma}{hexagonapprox}
\label{lem:hexagon_approx}
	For every integer $k = 6n+4, n\in \mathbb{N}$, the outerplanar graph $G^H_k$ in \cref{fig:red_complete_hex} is a $7$-stable approximation of a regular hexagon of side length $2k-1$.
\end{restatable}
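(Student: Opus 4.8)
The plan is to mirror the structure of the rhombus lemma (\cref{lem:rhombus_approx}) and to follow the blueprint of Lemma~11 in~\cite{bdlrst-rscplrudct-15}, assembling $G^H_k$ from six ladder segments joined by six corner connectors. First I would specify the combinatorial construction: take six copies of the ladder $L_m$ for an appropriate length $m$ (chosen as a function of $k$ so that the side length comes out to $2k-1$; the congruence $k=6n+4$ is exactly what makes the six identical pieces fit together with integer-length ladders and the side lengths match), and glue consecutive ladders with a corner connector as in \cref{fig:red_outer_graph_b}, closing the chain into a cycle. Outerplanarity is immediate: each ladder is a chain of $C_4$'s with pendant extension neighbors, each corner connector adds only a single vertex $v_c$ with three pendant leaves inside a $5$-cycle, and gluing them in a ring keeps every vertex on the outer face.

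Next I would establish the two geometric facts that the earlier components already give us. From the ladder analysis, in any \udr every ladder forces a strip of height at least $2\sqrt{3}+2-\varepsilon$ that admits no outward bend — only the slight inward bend coming from placing an extension neighbor almost on top of its outer vertex. From the corner connector analysis, at each of the six joints we get $\measuredangle v_j^1 v_c v_1^2 > \tfrac{4\pi}{3}$, i.e.\ each turn is an inward turn of at least $\tfrac{4\pi}{3}$ (equivalently, the exterior turn is at least $\pi/3$ toward the interior and, crucially, never in the outward direction). Summing the six forced inward turns around the closed chain, together with the no-outward-bend property of the ladders, pins the overall shape: the union $U$ of the disks must, up to a congruence $f$, stay within distance $7$ of the regular hexagon $P$ of side length $2k-1$. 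Concretely, as in the sketch of \cref{lem:rhombus_approx}, each straight ladder segment keeps $U$ within a bounded band around the corresponding hexagon edge, and the bound $d_H(f(P),U)\le 7$ follows from the same worst-case vertical-distance estimate used there ($3\sqrt{3}+1<7$), now applied edge by edge; since $\lambda$-stability only requires the one-sided Hausdorff distance $d_H(f(P),U)=\sup_{p\in f(P)}\inf_{u\in U}d(p,u)$, it suffices that every point of the idealized hexagon is within $7$ of some disk, which the ladder heights guarantee.

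The main obstacle, and the step I expect to take real care with, is the global rigidity argument: individually each ladder forbids only a \emph{local} outward bend and each connector only a \emph{local} lower bound on one angle, so one must rule out a "drifting" or spiralling configuration in which the accumulated small inward bends of the ladders, combined with connector angles strictly larger than $\tfrac{4\pi}{3}$, let the six segments wander far from a regular hexagon while still closing up. The resolution is a parity/closure argument: the six exterior angles of any closed polygonal chain sum to $2\pi$; each connector contributes an inward turn of at least $\tfrac{\pi}{3}$ that cannot be compensated by any outward contribution along a ladder (ladders contribute only inward turns), so each connector turn is in fact \emph{exactly} forced to be close to $\tfrac{\pi}{3}$ and each ladder close to straight, which is precisely the regular hexagon up to the $\varepsilon$-slack; feeding the resulting angular and length tolerances into the Hausdorff estimate gives the constant $7$. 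I would also double-check the side-length bookkeeping — that the ladder length parameter corresponding to $k=6n+4$ yields exactly side length $2k-1$ after accounting for the length consumed by the corner connectors — and confirm that the $\varepsilon$ in the ladder height does not push the bound over $7$ (it does not, since $3\sqrt{3}+1\approx 6.196$ leaves ample room).
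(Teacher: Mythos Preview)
Your proposal has a genuine structural gap: the construction you describe is \emph{not} the paper's $G^H_k$, and your version is not outerplanar. You close the six ladders into a ring via six corner connectors. But a ladder has both an outer path $v_1,\dots,v_k$ and an inner path $v_1',\dots,v_k'$; closing the ring produces two nested cycles (outer and inner) connected by rungs, i.e.\ a prism-type graph, which contains a $K_4$ minor and hence is not outerplanar. Your claim that ``gluing them in a ring keeps every vertex on the outer face'' is false --- the inner vertices $v_i'$ are trapped inside the outer cycle. This is precisely why the paper uses only \emph{five} corner connectors, leaving the outer hull an open chain.

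Once the hull is open, your turning-number argument (the six exterior turns must sum to $2\pi$, so each is forced to be exactly $\pi/3$ and each ladder straight) no longer applies: there is no closure constraint, and nothing in your write-up prevents the open hull from curling arbitrarily far inward. The paper handles this by a mechanism you have omitted entirely: the interior of the hexagonal hull is \emph{filled with additional ladders}, packed so tightly that no further disk fits in the gaps. Inward rigidity then comes from an incompressibility argument --- each interior ladder contributes at least its minimum height $2\sqrt{3}+2-\varepsilon''$ (Lemma on ladder height), so the hull cannot collapse by more than a bounded amount; the worst-case analysis of the most compact packing (\cref{fig:red_complete_hex_compact}) gives the $3\sqrt{3}+1<7$ bound. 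Your outward-direction reasoning (ladders and connectors forbid outward bends, so the \udr\ sits inside a congruent copy of the hexagon) is fine and matches the paper, but it is only half of the proof.
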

\begin{proof}[Sketch]
    The $7$-stable approximation $G^H_k$ of a hexagon of side length $2k-1$ uses again ladders and corner connectors to trace the outline of the hexagon.
    Then the inside of this construction is filled with a set of ladders 
    until no additional ladders can be added, see \cref{fig:red_complete_hex}.
    Outwards bends are impossible by construction of the ladders and corner connectors and inwards bends are strongly limited since the interior of the hexagon is almost completely filled with ladders.
    Since the amount of compression in a ladder is very limited, this leaves only a constant amount of space on the inside of a \textsf{UDR} of $G^H_k$.
\end{proof}

\begin{theorem}\label{thm:outer_hardness}
	Recognizing unit disk graphs is \NP-hard for outerplanar graphs.
\end{theorem}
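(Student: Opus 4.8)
The plan is to follow the template of Bowen et al.~\cite{bdlrst-rscplrudct-15} almost verbatim, replacing only their \textsf{UDC} gadgets for hexagons and rhombi by the outerplanar \udr gadgets $G^H_k$ and $G^R_k$. Starting from a planar 3-SAT formula $\varphi$, their reduction produces in polynomial time a polygonal linkage $\mathcal{L}_\varphi$ --- a set of hexagons of the various sizes used in \cref{fig:bowen_auxilliary} and thin rhombi (the short connecting corridors), together with the point identifications along shared boundaries --- such that $\mathcal{L}_\varphi$ is realizable if and only if $\varphi$ is satisfiable. After uniformly scaling $\mathcal{L}_\varphi$ so that every required side length has the form $2k-1$ with $k=6n+4$ (and every rhombus has an admissible width), we build a graph $G_\varphi$ by instantiating one copy of $G^H_k$ for each hexagon and one copy of $G^R_k$ for each rhombus, and wherever two polygons of $\mathcal{L}_\varphi$ are identified at a boundary point we glue the corresponding gadgets by identifying a constant-size subgraph near that point (e.g.\ an outer ladder vertex together with its extension neighbors), exactly as Bowen et al.\ glue their gadgets. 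Each gadget is outerplanar and the gluings follow the same planar incidence pattern as $\mathcal{L}_\varphi$, whose identification points all lie on polygon boundaries; hence $G_\varphi$ is outerplanar and of size polynomial in $|\varphi|$, and since any \udr induces its own embedding, no prescribed embedding is needed.

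For the forward direction, given a realization of $\mathcal{L}_\varphi$ we place inside each polygon the canonical \udr of the gadget instantiating it (\cref{fig:red_outer_graph_c,fig:red_outer_graph_d,fig:red_outer_rhombus}); because these \udrs stay within Hausdorff distance $7$ of their polygons (\cref{lem:rhombus_approx,lem:hexagon_approx}) and the glued constant-size subgraphs sit at the shared identification points, the local pictures agree and we obtain a \udr of $G_\varphi$. For the converse, suppose $G_\varphi$ has a \udr $\mathcal{D}$. Restricting $\mathcal{D}$ to each gadget and applying the $7$-stability from \cref{lem:rhombus_approx,lem:hexagon_approx} yields, for every polygon $P$ of $\mathcal{L}_\varphi$, a congruence placing a copy of $P$ within Hausdorff distance $7$ of the corresponding sub-representation, and the glued subgraphs pin the relative positions of adjacent copies up to a bounded error. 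Since Bowen et al.'s linkage is engineered with positive-width corridors and slack in the small ``state'' hexagons so that any constant additive perturbation is harmless, rounding out these bounded errors gives a valid realization of $\mathcal{L}_\varphi$, so $\varphi$ is satisfiable. Combining the two directions, $G_\varphi$ admits a \udr iff $\varphi$ is satisfiable, which establishes \NP-hardness.

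The main obstacle I anticipate is the converse direction: controlling how the bounded Hausdorff errors of neighbouring gadgets accumulate at the gluing regions, and verifying that this accumulated slack never exceeds the tolerance built into Bowen et al.'s linkage. In particular one must rule out that the extra freedom of the \udr model --- adjacent disks may overlap arbitrarily, so a gadget's disk union can ``compress'' a little --- could be exploited to fake a realization of an unsatisfiable instance; this is precisely where the quantitative estimates behind \cref{lem:rhombus_approx} (maximal vertical deviation $3\sqrt3+1<7$) and \cref{lem:hexagon_approx} (only a constant amount of interior play) are essential, together with Bowen et al.'s original accounting for why constant perturbations are innocuous. A secondary, more routine point is a clean proof that the tree-like system of gluings of outerplanar gadgets is again outerplanar, which follows because each gluing identifies a connected bounded subgraph lying on one face and the inherited incidence structure is crossing-free, so one can exhibit a single face containing all vertices.
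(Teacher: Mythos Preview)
Your proposal is correct and follows essentially the same route as the paper: invoke \cref{lem:rhombus_approx,lem:hexagon_approx} to obtain $7$-stable outerplanar gadgets for rhombi and hexagons, then plug them into Bowen et al.'s polygonal-linkage reduction. The paper's proof is in fact a two-line pointer to exactly this; the only minor discrepancy is that the paper (following Bowen et al.) models hinges by connecting two gadgets with a short path of one to three new vertices rather than by identifying a constant-size subgraph, and the outerplanarity of $G_\varphi$ is obtained from the observation that the hinge structure of $\mathcal{L}_\varphi$ is a tree, so joining outerplanar pieces along cut vertices or short paths preserves outerplanarity.
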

\begin{proof}
This result follows from \Cref{lem:rhombus_approx,lem:hexagon_approx} by using the polygonal linkage reduction of Bowen et al.~\cite{bdlrst-rscplrudct-15} (see also \arxivTHENgd{\cref{hardness-appendix}}{Appendix A of the full version~\cite{bhore2021recognition}}).
Note that we can emulate hinges exactly as in the original reduction.
\end{proof}

\begin{figure}
	\centering
	\begin{subfigure}[t]{\linewidth}
		\centering
		\includegraphics[page=5, width=.75\linewidth]{figures/red_outer_hexagon.pdf}
		\subcaption{
			Combination of the ladder and corner connectors into $7$-stable approximation $G^H_k$ of a hexagon superimposed on its \udr.
		    \\
		}
		\label{fig:red_complete_hex}
	\end{subfigure}
	\begin{subfigure}[t]{\linewidth}
		\centering
		\includegraphics[page=8, width=.75\linewidth]{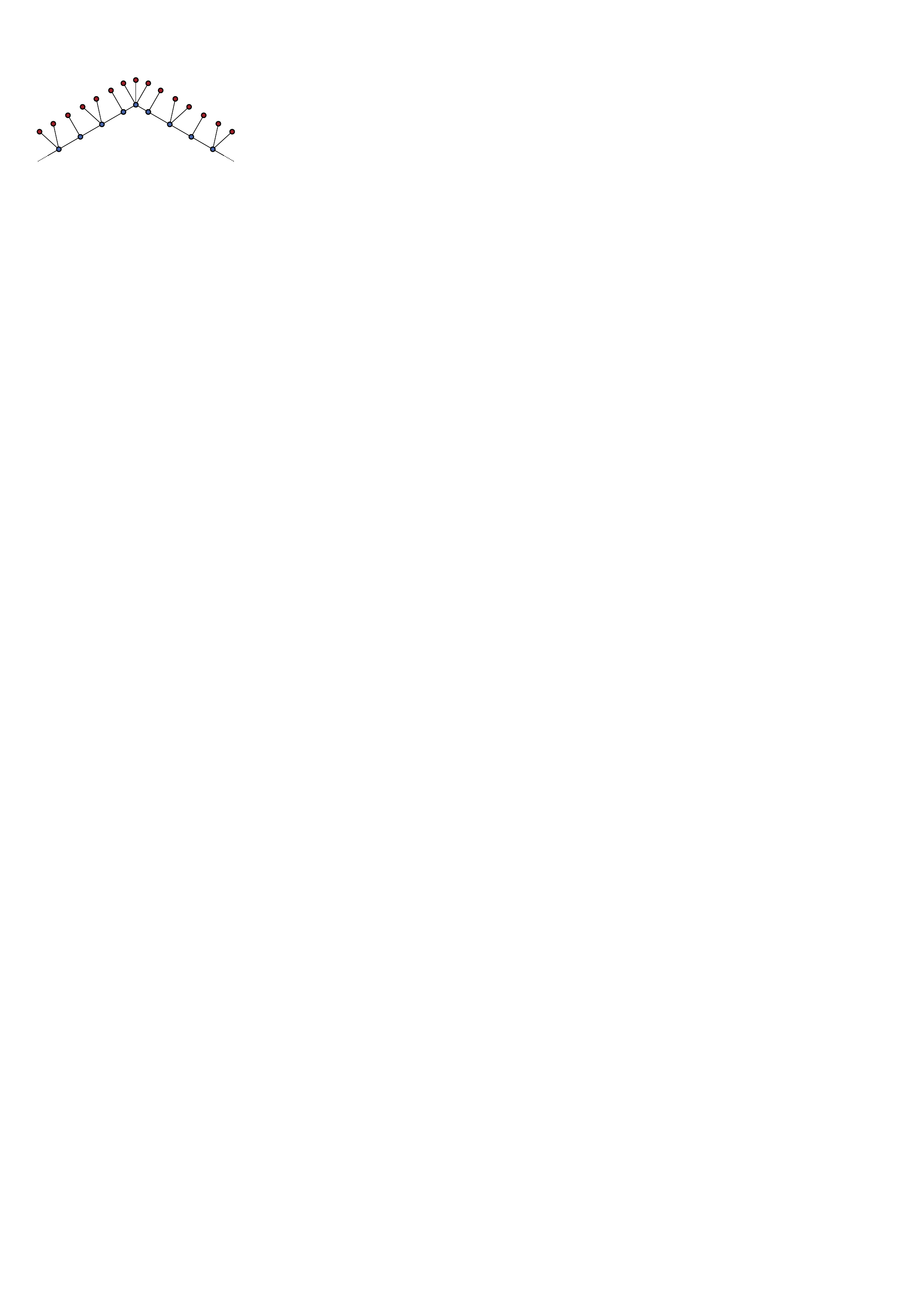}
		\caption{
			A $7$-stable approximation tree $T^H_k$ of a hexagon superimposed on its \udr.
		}
		\label{fig:red_complete_hex_tree}
	\end{subfigure}
	\caption{7-stable hexagon approximations. Hatched disks indicate almost overlapping placement of disks. Necessary infinitesimal bends are omitted. The bend directions of the inner components are indicated with gray arrows. The approximated regular hexagon is indicated by the dashed green outlines.}\label{fig:stablehex}
\end{figure}

\subsection{Embedded Trees}
\label{sec:trees}
By slightly adapting the construction of the outerplanar graphs of Section~\ref{sec:outer}, we can prove that recognizing unit disk graphs is \NP-hard for embedded trees.
The crucial observation is that we used the outerplanarity of $G^R_k$ and $G^H_k$ exclusively to be able to build a tree-like structure out of chains of $4$- and $5$-cycles.
We used this to force the placement of leaf disks to a specific side of these chains in any \udr of $G^R_k$ and $G^H_k$.
As we are concerned with embedded trees in this section, we can omit the inner vertices of the ladder, as the given embedding puts the leaves on the desired side of the chains.
This results in a tree.
We call a ladder without the inner vertices a \emph{chain}.

We can now use a very similar construction idea as for $G^R_k$ and $G^H_k$ above. We need to augment both gadgets with an additional chain 
in order to retain the property, that no parts of these gadgets can be folded onto themselves.
The resulting trees $T^R_k$ and $T^H_k$ are shown in \cref{fig:red_tree_rhombus,fig:red_complete_hex_tree}.
\arxivTHENgd{A more detailed description can be found in the proof of \cref{thm:tree_hardness} in \cref{sec:app:tree_hardness}.}{}
\begin{restatable}[$\star$]{theorem}{treehardness}
    \label{thm:tree_hardness}
	Recognizing unit disk graphs is \NP-hard for embedded trees. 
\end{restatable}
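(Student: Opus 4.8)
The plan is to mirror the hardness proof for outerplanar graphs (\cref{thm:outer_hardness}): keep the polygonal-linkage reduction of Bowen et al.~\cite{bdlrst-rscplrudct-15} intact and only rebuild its hexagon and rhombus gadgets as embedded trees. Recall that in Section~\ref{sec:outer} outerplanarity was used in exactly one place — to force every extension-neighbour leaf of a ladder (and of a corner connector) onto the prescribed side of the chain of $4$- and $5$-cycles. First I would observe that a prescribed combinatorial embedding does the same job for free: replace each ladder $L_k$ by a \emph{chain} $C_k$, namely the path $v_1,\dots,v_k$ of outer vertices with their extension-neighbour leaves attached so that the $v_i$ again have alternating degrees four and five, but with the inner vertices deleted. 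The result is a tree, and once we fix its embedding the leaves are forced to the correct side in any embedding-preserving \textsf{UDR}. One then checks, exactly as for the ladder, that the disks of $C_k$ still occupy at least the minimal height $2\sqrt3+2-\varepsilon$ of three tightly packed rows, so outward bends stay impossible and inward bends stay bounded; the corner connector is adapted identically (a path in place of the $C_4$, the three extension neighbours of $v_c$ kept) so that $\measuredangle v_j^1 v_c v_1^2 > \tfrac{4\pi}{3}$ still holds.

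The only property lost when the cycles disappear is global rigidity: a bare chain, unlike a ladder, could in principle fold back onto itself, because a plane embedding does not by itself forbid two far-apart parts of a tree from overlapping. I would repair this by augmenting each gadget with one extra chain, attached to the main body at a single vertex in a hairpin fashion, so that the two parallel chains together reproduce the minimal bounding-box height and the non-foldability that the $C_4$'s previously guaranteed. This produces the trees $T^R_k$ and $T^H_k$ of \cref{fig:red_tree_rhombus,fig:red_complete_hex_tree}. Re-running the proofs of \cref{lem:rhombus_approx,lem:hexagon_approx} with the substitutions ``embedding'' for ``outerplanarity'' (when fixing leaf sides) and ``extra hairpin chain'' for ``$C_4$ chain'' (when ruling out folds) should then show that $T^R_k$ is a $7$-stable approximation of a rhombus of width $2k+6$ and that $T^H_k$ is a $7$-stable approximation of a regular hexagon of side length $2k-1$, for the appropriate values of $k$.

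Finally I would plug these stable tree approximations into the polygonal-linkage construction, obtaining the reduction from planar 3-SAT: hinges are emulated exactly as in the original reduction of Bowen et al., and the whole instance is assembled respecting a single plane embedding, which is automatic because each gadget comes with a prescribed embedding and the incidence graph of the 3-SAT formula is already laid out planarly on the hexagonal grid. This yields \NP-hardness of recognizing unit disk graphs for embedded trees. The hard part will be the rigidity step: proving rigorously that the hairpin-augmented chains — which contain no cycle at all — still cannot be folded onto themselves and still confine the interior slack of $T^H_k$ to a constant, so that the Hausdorff bound of $7$ in the definition of a $\lambda$-stable approximation is genuinely preserved.
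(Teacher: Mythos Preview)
Your plan is essentially the paper's own proof: delete the inner vertices so that the gadgets become trees, let the fixed embedding take over the side-forcing role that outerplanarity played, observe that \cref{lem:bend_outer} and \cref{lem:bend_outer_corner} never used the inner vertices and hence carry over verbatim, and compensate for the lost rigidity by inserting an additional chain to block self-folding. The paper does exactly this; for the rhombus it adds a third middle arm (see \cref{fig:red_tree_rhombus}), and for the hexagon it fills the interior asymmetrically with one extra chain on one side.

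One point needs correcting. A chain $C_k$ without inner vertices consists of only \emph{two} rows of disks (outer vertices plus extension neighbours), not three, so its minimal bounding height is roughly $\sqrt3+2$, not $2\sqrt3+2-\varepsilon$. This does not affect the bend lemmas (which rely only on the extension neighbours lying on the prescribed side), but it does affect the covering argument: with only two-row chains, the two original arms of the rhombus no longer suffice to guarantee that the \textsf{UDR} boundary straddles $\overline{c_0l_0}$ everywhere. The paper's fix is precisely your extra chain, but note that the resulting $T^R_k$ has \emph{three} full-length arms whose combined minimum height is that of six packed rows ($5\sqrt3+2$); that is what actually yields the $7$-stability bound. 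So your ``hairpin'' must really be a third arm running the full length, and the height bookkeeping in your re-run of \cref{lem:rhombus_approx,lem:hexagon_approx} has to be redone from two-row chains rather than copied from the three-row ladder case.
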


\begin{figure}[tb]
	\centering
	\begin{minipage}[t]{0.44\linewidth}
		\centering
		\includegraphics[page=3, scale=.6]{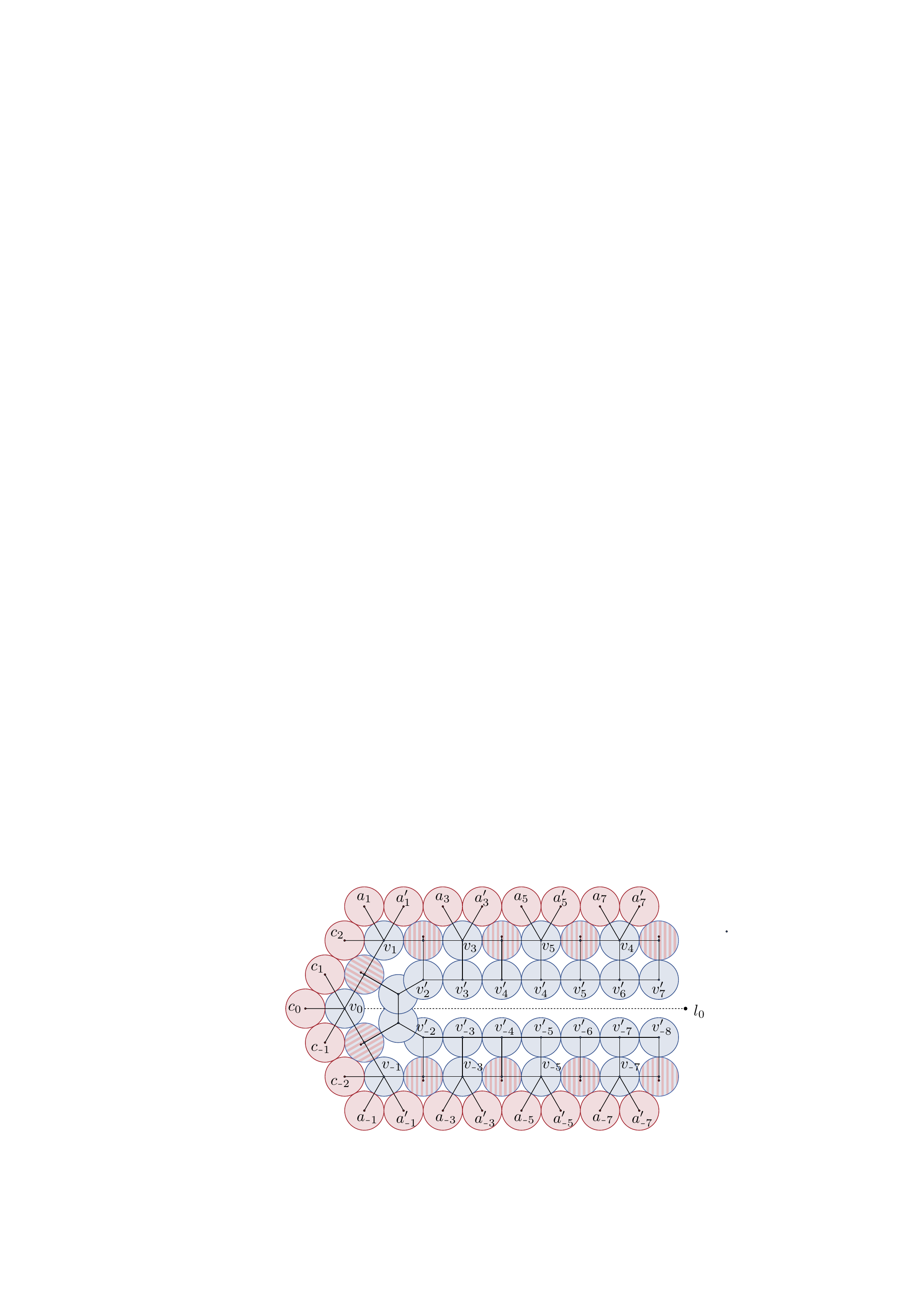}
		\subcaption{Maximal distance to $\overline{c_0l_0}$}
		\label{fig:red_tree_rhombus_a}
	\end{minipage}
	\hfill
	\begin{minipage}[t]{0.44\linewidth}
		\centering
		\includegraphics[page=5,scale=.6]{figures/red_tree_rhombus.pdf}
		\subcaption{\udr on both sides of $\overline{c_0l_0}$}
		\label{fig:red_tree_hexagon}
	\end{minipage}
	\caption{
		A $7$-stable approximation $T^R_k$ of a long thin rhombus superimposed on its \udr. The maximal distance of any point of the \udr to $\overline{c_0l_0}$ is smaller or equal 7 (a) and at all points of $\overline{c_0l_0}$ a part of the \udr lies above and below $\overline{c_0l_0}$ (b).
		In both cases at any point along $\overline{c_0l_0}$ at least one point on the boundary of the union of all disks in a \udr of $T^R_k$ lies on or above $\overline{c_0l_0}$ and on or below $\overline{c_0l_0}$.
	}
	\label{fig:red_tree_rhombus}
\end{figure}

\section{Recognition Algorithm for Caterpillars}\label{algo-section:caterpillar}
We propose a linear-time algorithm using similar ideas to Klemz et al.~\cite{knp-rwdcg-15a}, that recognizes if an input caterpillar graph $G=(V,E)$
admits a \udr or not; it is constructive and provides a representation if one exists.  
However, we need to address several new issues as we show that a larger class of graphs admits a \udr compared to a \udc. 
Clearly, if $G$ contains a vertex of degree at least $6$, then due to the unit disk packing property, it does not admit a \udr.
Hence, every realizable  caterpillar must have maximum
degree $\Delta \le 5$. Moreover, it is easy to observe that all caterpillars with $\Delta \le 4$ admit a \udc (and thus a \udr), as also noted by Klemz et al.~\cite{knp-rwdcg-15a}. 
Not every caterpillar with $\Delta=5$, however, is realizable as \udr. 
We show that two consecutive degree-$5$ vertices on $B_G$ cannot be realized. 
The following lemma gives a sufficient condition for a ``No'' instance to be used in the recognition algorithm. 
\arxivTHENgd{The proof of \cref{lem:claim_algo} is stated in \cref{sec:proof_characterization}.}{}

\begin{restatable}[$\star$]{lemma}{claimalgo}
	\label{lem:claim_algo}
	If $B_G$ contains two adjacent degree~$5$ vertices $u,v$, 
	then it does not admit a unit disk intersection representation.
\end{restatable}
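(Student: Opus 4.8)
The plan is to first reduce the claim to a purely geometric statement about one small graph, and then attack that statement with an angular packing argument around $u$ and $v$.

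\textbf{Step 1 (combinatorial reduction).} Suppose $G$ admits a \udr $\mathcal D$, and let $u,v$ be adjacent degree-$5$ backbone vertices. Since $G$ is a caterpillar, hence a tree, $u$ and $v$ have no common neighbour, so $u,v$ together with the four neighbours $x_1,\dots,x_4$ of $u$ distinct from $v$ and the four neighbours $y_1,\dots,y_4$ of $v$ distinct from $u$ are ten distinct vertices. In a tree no two neighbours of a common vertex are adjacent, and every path between an $x_i$ and $v$, an $x_i$ and a $y_j$, or $u$ and a $y_j$ has length at least $2$; hence the subgraph of $G$ induced on these ten vertices is exactly the ``double spider'' $H$: the edge $uv$ with four pendant leaves attached at each endpoint. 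Restricting $\mathcal D$ to these ten disks is a \udr of $H$, so it suffices to prove that $H$ has no \udr.

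\textbf{Step 2 (setup and local angular constraints).} Assume $H$ has a \udr; identify vertices with disk centres and put $u=(0,0)$, $v=(d,0)$ with $d=\|u-v\|\in(0,2]$. The five neighbours of $u$ are pairwise non-adjacent, so their disks are pairwise disjoint: their centres are pairwise at distance $>2$ while each is at distance $\le 2$ from $u$. Writing $\|p-q\|^2=\|p-u\|^2+\|q-u\|^2-2\|p-u\|\,\|q-u\|\cos\measuredangle puq$ and maximising the right-hand side over $\|p-u\|,\|q-u\|\le 2$ for a fixed angle shows two such centres subtend an angle $>60^\circ$ at $u$; hence the five neighbours of $u$ occupy directions around $u$ with consecutive gaps all in $(60^\circ,120^\circ)$, and likewise around $v$. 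A companion estimate (maximising $\|p-v\|^2=\|p-u\|^2+d^2-2\|p-u\|\,d\cos\measuredangle puv$ over $\|p-u\|\le 2$) shows that a neighbour $p$ of $u$ with $p\not\sim v$ must satisfy $\measuredangle puv>\psi:=\arccos(d/4)\in[60^\circ,90^\circ)$, and symmetrically around $v$. Therefore the four leaves of $u$ lie in the cone of directions from $u$ making angle $>\psi$ with $\overrightarrow{uv}$ -- of width $360^\circ-2\psi\le 240^\circ$ -- pairwise more than $60^\circ$ apart, so they span more than $180^\circ$, and in particular one of them points in a direction in $(\psi,180^\circ-\psi)$ and another in $(180^\circ+\psi,360^\circ-\psi)$, i.e.\ close to the directions of the two points $c_\pm=(d/2,\pm\sqrt{4-d^2/4})$ where the boundary circles of the radius-$2$ disks about $u$ and about $v$ meet; note $\|c_\pm-u\|=\|c_\pm-v\|=2$. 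The mirror statement holds for the four leaves of $v$.

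\textbf{Step 3 (geometric core -- and the main obstacle).} It remains to show the two leaf-fans cannot coexist, by exhibiting a leaf of $u$ and a leaf of $v$ at distance $\le 2$ (contradicting $x_i\not\sim y_j$). The intended mechanism: in a direction close to that of $c_+$ (resp.\ $c_-$), a leaf of $u$ must lie inside the radius-$2$ disk about $u$ and yet at distance $>2$ from $v$, which forces its distance to $u$ close to $2$, hence the leaf close to $c_+$ (resp.\ $c_-$); the symmetric leaf of $v$ is then close to the same point, giving a too-close pair. The real difficulty -- exactly the ``new issue'' absent in the contact model -- is that a leaf of $u$ may \emph{overlap} $d(u)$ and sit almost on top of $u$, so that it is not pinned near $c_\pm$. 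I would handle this by noting that at most one leaf of $u$ can lie within distance $1$ of $u$ (two such would be within distance $2$ of each other), so that at least three leaves of $u$ lie in the annulus $\{1<\|p-u\|\le 2\}$ and are thereby pushed towards $c_+$ and $c_-$, and likewise for $v$; combining this with a case split according to whether $d$ is close to $2$ (wide cone) or small (cone close to $180^\circ$, forcing the leaves essentially onto the directions $\psi,\psi+60^\circ,\dots$), one should in every case produce a leaf of $u$ and a leaf of $v$ both within distance less than $1$ of a common point $c_\pm$, hence at distance less than $2$. This case analysis is the technical heart of the proof; once it is complete, $H$ has no \udr, and therefore neither does $G$.
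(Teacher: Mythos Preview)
Your Step 1 (reduction to the double spider $H$) and the angular estimates in Step 2 are correct and a reasonable way to set the problem up. The essential content of the lemma, however, lives entirely in Step 3, and there you do not give a proof but only a plan: phrases like ``I would handle this by'' and ``one should in every case produce'' are promissory, not arguments. The case split on $d$ you outline is genuinely delicate. When $d$ is small the excluded cone at $u$ has half-angle $\psi\approx 90^\circ$, so the four leaves of $u$ must be packed into a near-half-disk; but the $60^\circ$ lower bound on pairwise angular separation is only tight when both leaves sit near radius~$2$, and your own observation that a leaf may sit almost on top of $u$ shows this can fail. Saying ``at most one leaf of $u$ lies within distance $1$ of $u$'' is true but does not rescue the argument: three leaves in the annulus $1<r\le 2$ still have pairwise angular separation only bounded below by $60^\circ$ in the worst case, and the fourth leaf near $u$ destroys the ``one leaf near $c_+$, one near $c_-$'' conclusion you want to feed into the comparison with the $v$-leaves. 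Without actually executing the case analysis you cannot claim the lemma.

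The paper avoids this global packing analysis altogether. It fixes two leaves $u_1,u_2$ of $u$ and two leaves $v_1,v_2$ of $v$ in the tightest symmetric configuration, identifies the small open fan-shaped region in which the remaining upper leaf $x$ of $u$ can lie (inside the $2$-disk of $u$, outside those of $u_2$ and $v$) and the symmetric region for the remaining upper leaf $x'$ of $v$, and checks that these two regions contain no pair of points at distance greater than $2$: the only candidate pairs at distance at least $2$ lie on the boundaries of the regions, which are excluded. It then argues that rotating $u_1,u_2,v_1,v_2$ away from this symmetric position only shrinks one of the two regions. This extremal-configuration-plus-residual-region argument is exactly what your Step 3 is missing; if you want to push your angular approach through instead, you need to actually produce the clashing pair $x_i,y_j$ in every case, not just assert that it ``should'' appear.
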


\subsection{The Algorithm}

As a preprocessing step we augment all backbone vertices of degree~$3$ or lower with additional degree-$1$ neighbors, s.t., they have degree~$4$.
Consider a chain $v_1, \dots, v_n$ of backbone vertices.
Now assume all vertices are of degree~$4$ or lower.
We place them on a horizontal line.
For each $1 \leq i \leq n$ at disk $d(v_i)$,
we place its leaf neighbor disks $d(v_i^t), d(v_i^b)$ first at the top and then at the bottom of $d(v_i)$, see \cref{fig:caterpillar_construction_a}, s.t., the clockwise angle $\measuredangle v_i^tv_iv_i^b = \frac{4\pi}{3} - 2i\varepsilon$.
The rotational $\varepsilon$ offset avoids adjacencies between the leaf disks. While these offsets can add up, we can choose $\varepsilon$ small enough for every finite caterpillar, s.t., this is negligible.

Now we assume that not all vertices are of degree 4 or lower.
To keep the entire construction of the backbone $x$-monotone, whenever we encounter a degree~$5$ vertex $u$ after a degree four vertex $v_k$, we place $d( u^{t'})$ of its additional leaf $u^{t'}$ alternatingly on the top or the bottom side with a $\frac{\pi}{3} + \varepsilon$ rotational offset to $d(u^t)$ (or $d(u^b)$).
We will assume that we placed the disk at the top.
Therefore $\measuredangle u^{t'}uu^{b} \leq \pi - (2k+1)\varepsilon$, i.e., an almost horizontal connection, see \cref{fig:caterpillar_construction_b}.

If the next vertex $x$ has also degree five, 
then due to Lemma~\ref{lem:claim_algo} we know that the sequence is not realizable. Otherwise, we place $d(x)$, s.t., it is touching $d(u)$ with a $\frac{\pi}{3} + \varepsilon$ rotational offset to $d(x)$, 
see \cref{fig:caterpillar_construction_b}.
We place $d(x^b)$ at the planned position relative to $d(x)$ at the bottom, i.e., with a $\frac{\pi}{3} + (k+2)\varepsilon$ counterclockwise offset relative to the $x$-axis, however, we place $d(x^t)$ almost exactly on top of $d(x)$ with a very small shift of $\frac{\varepsilon}{Cn}$ orthogonal to $\overline{ux}$, 
for some large constant $C$.
This prevents touching of $d(u)$ and $d(x^t)$, without creating an adjacency between $d(u^{t'})$ and $d(x^t)$.

\begin{figure}[tb]
	\centering
	
	\begin{subfigure}[t]{0.49\linewidth}
		\centering
		\includegraphics[page=6, scale=.8]{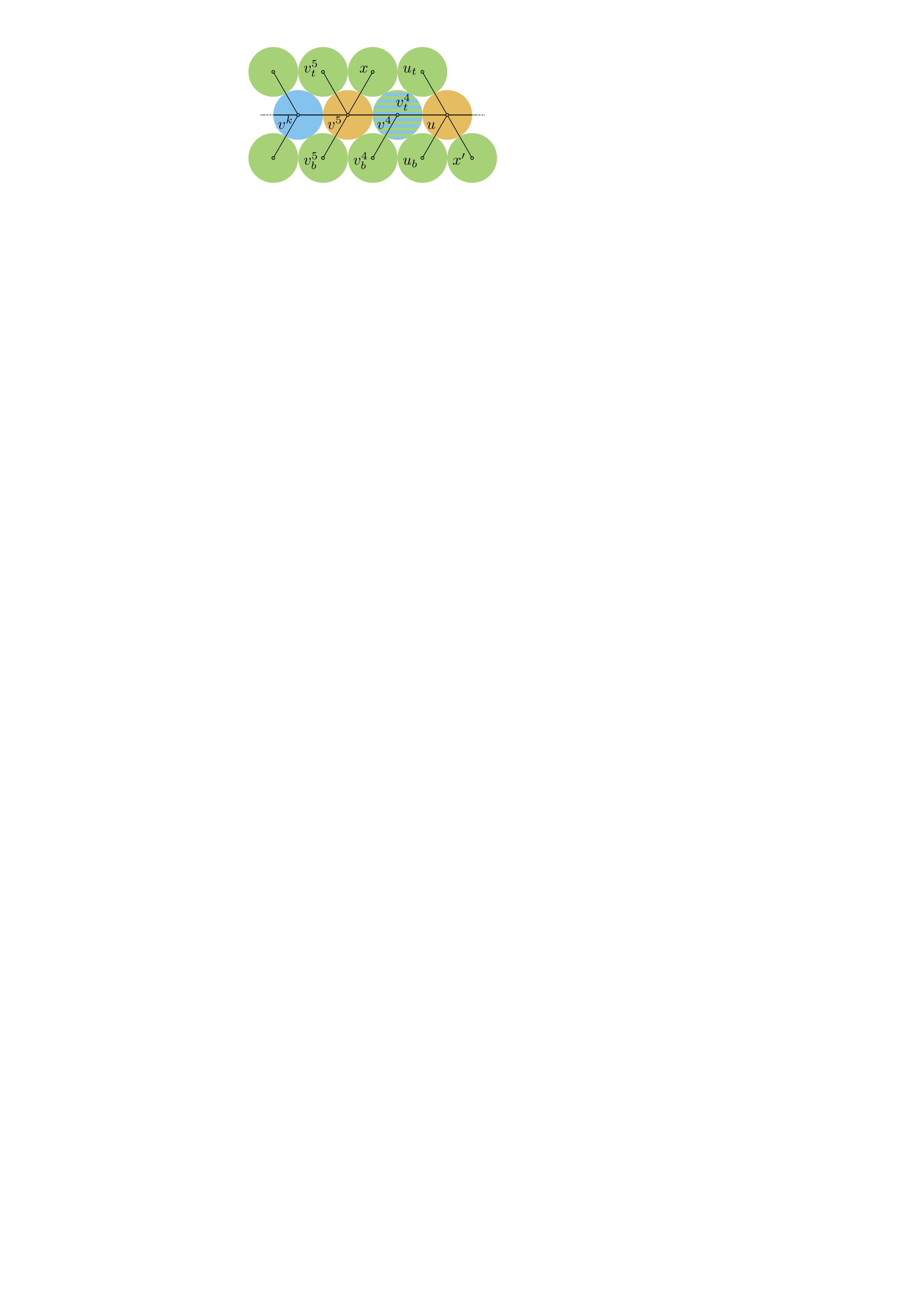}
		\caption{$\varepsilon$-offsets. $\rho_k = \frac{4\pi}{3} - 2k\varepsilon$}
		\label{fig:caterpillar_construction_a}
	\end{subfigure}
	\hfill
	\begin{subfigure}[t]{0.49\linewidth}
		\centering
		\includegraphics[page=2, scale=.8]{figures/caterpillarrealization.pdf}
		\subcaption{Placement of $x$ after encountering $u$}
		\label{fig:caterpillar_construction_b}
	\end{subfigure}
	
	\begin{subfigure}[t]{0.49\linewidth}
		\centering
		\includegraphics[page=3, scale=.8]{figures/caterpillarrealization.pdf}
		\caption{$\eta = \eta'$ and $ \eta + \eta' = \frac{4\pi}{3}-(2k+6)\varepsilon$}
		\label{fig:caterpillar_construction_c}
	\end{subfigure}
	\hfill
	\begin{subfigure}[t]{0.49\linewidth}
		\centering
		\includegraphics[page=4, scale=.8]{figures/caterpillarrealization.pdf}
		\subcaption{$\tau = \tau'$ and $\tau + \tau' = \pi-(2k+7)\varepsilon$}
		\label{fig:caterpillar_construction_d}
	\end{subfigure}
	\caption{
		Chains of degree-4 vertices are placed in a dense packing formation with small offsets (a).
		A degree-5 vertex places an additional leaf on one side (b).
		The next vertex $v_{k+3}$ can again be placed with the desired angle of just over $\frac{2\pi}{3}$ between two neighbors (c).
		Placement of $v_{k+3}$ is possible if its degree is 5 (d).
        Note that, the rotational offset angles are exaggerated, for better readability.
	}
	\label{fig:caterpillar_construction}
\end{figure}

From this point onwards, we consider the direction of $\overline{ux}$ to be the direction in which we extend the backbone of the caterpillar. Any following disk $d(v_{k+3})$ can be placed again in the new extension direction touching $d(x)$. Its leaf disks $d(v_{k+3}^t)$ and $d(v_{k+3}^b)$ can be placed in their planned positions,  i.e. with a clockwise or counterclockwise offset of $\frac{\pi}{3} + (k+3)\varepsilon$ relative to the new extension direction, respectively, which results in a clockwise angle $\measuredangle u_tuu_b \leq \frac{4\pi}{3} - (2k+6)\varepsilon$.
Note that $v_{k+3}$ can have a degree of four  (\cref{fig:caterpillar_construction_c}) or five (\cref{fig:caterpillar_construction_d}) and that at this point, if $v_{k+3}$ has degree five, we can immediately repeat this procedure.

As a  postprocessing step, we remove all degree-$1$ vertices that were added in the preprocessing step.
Then from the above description of the algorithm and the correctness analysis in \arxivTHENgd{Appendix~\ref{sec:app:algo_correct}}{Appendix~B.2 of the full version~\cite{bhore2021recognition}} we obtain the following theorem.

\begin{theorem}
	\label{algo-main-th}
	Let $G=(V,E)$ be a caterpillar graph.
	$G$ admits a \udr if and only if $G$ does not contain any two adjacent degree-$5$ vertices in the backbone path $B_G$ of $G$.
	This property can be tested in linear time and if a \udr exists then it can be constructed in linear time.
\end{theorem}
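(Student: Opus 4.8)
The plan is to prove the characterization in the two directions separately and then verify the time bound. For \emph{necessity}, I would first observe that a caterpillar containing a vertex of degree $\ge 6$ admits no \udr at all, by the unit-disk packing bound (six disks touching a common disk force two of them to touch), and that such a vertex necessarily lies on $B_G$ since all internal vertices of a caterpillar lie on the backbone; hence we reject immediately and may assume $\Delta\le 5$. Then, if $B_G$ contains two adjacent degree-$5$ vertices, \cref{lem:claim_algo} rules out a \udr. This settles the ``only if'' direction with essentially no extra work.

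For \emph{sufficiency}, assume $\Delta\le 5$ and that no two degree-$5$ backbone vertices are adjacent; the goal is to show that the drawing produced by the algorithm above is a valid \udr. After preprocessing, every backbone vertex has degree exactly $4$ or $5$, and every degree-$5$ backbone vertex is preceded and followed (when these exist) by a degree-$4$ backbone vertex. Scanning $B_G=v_1,\dots,v_n$ from left to right, I would prove by induction on $i$ that the partial drawing of $v_1,\dots,v_i$ together with their leaves satisfies: (i) consecutive backbone disks touch; (ii) each leaf disk touches exactly the disk of its own backbone vertex; (iii) the two or three leaf disks of a single backbone vertex are pairwise disjoint; and (iv) the backbone polyline built so far is $x$-monotone, with the current extension direction confined to a fixed narrow cone about the $x$-axis. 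Invariant (iv) is the workhorse: $x$-monotonicity with bounded turning implies that disks of backbone vertices $v_i,v_j$ with $|i-j|$ exceeding some absolute constant are at distance $>2$, so the only candidate spurious contacts are \emph{local}, confined to a bounded window of consecutive backbone vertices and their leaves. There are only finitely many such local patterns -- a degree-$4$ vertex after a degree-$4$ vertex, a degree-$5$ vertex after a degree-$4$ vertex, and a degree-$4$ vertex after a degree-$5$ vertex -- and for each of them I would check, using the nominal angles ($\tfrac{4\pi}{3}$ or $\pi$ between the extreme leaves of a vertex), the cumulative rotational offsets of total size $O(n\varepsilon)$, and the single orthogonal shift of size $\varepsilon/(Cn)$ inserted after a degree-$5$ vertex, that every adjacent pair of disks remains in contact while every non-adjacent pair stays at distance strictly greater than $2$. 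Since the instance is finite, $\varepsilon$ can be fixed small enough (e.g.\ $\varepsilon=\Theta(1/n)$ with a suitable constant) to satisfy all of these slack inequalities simultaneously. Deleting in postprocessing the degree-$1$ vertices introduced during preprocessing only removes disks, so the drawing restricts to a \udr of the original $G$.

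For the \emph{running time}: deciding the characterization is a single pass over $B_G$ searching for a vertex of degree $\ge 6$ or two consecutive degree-$5$ vertices, which takes $O(n)$ time; the preprocessing, the left-to-right placement (each vertex and each leaf placed in $O(1)$ time once $\varepsilon$ and the top/bottom alternation parity of the degree-$5$ vertices have been fixed), and the postprocessing are each $O(n)$, giving the claimed bound and producing an explicit \udr when one exists.

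The step I expect to be the main obstacle is making invariant (iv) precise and, with it, carrying out the local non-adjacency analysis: one must ensure that the per-vertex rotational $\varepsilon$-offsets, which accumulate along the backbone, never grow large enough to create a forbidden contact, and that the special handling of a degree-$5$ vertex $u$ -- introducing the extra leaf $u^{t'}$, placing $d(x^t)$ almost on top of $d(x)$ up to the tiny orthogonal shift, and redefining the extension direction along $\overline{ux}$ -- neither pulls $d(u)$ into contact with $d(x^t)$ nor pushes $d(u^{t'})$ into contact with $d(x^t)$. Establishing $x$-monotonicity of the backbone with a bounded turning cone is exactly what reduces the global correctness statement to finitely many local configurations, and is therefore the crux of the argument.
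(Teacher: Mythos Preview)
Your proposal is correct and follows essentially the same approach as the paper: necessity via the degree bound and \cref{lem:claim_algo}, sufficiency by verifying the constructive algorithm through an inductive invariant along the backbone, and a linear-time single scan for both recognition and construction. The paper's correctness argument (Appendix~B.2) phrases the invariant as ``the angle between a backbone vertex and its leftmost-top and leftmost-bottom leaf is less than $\pi$,'' which is equivalent in function to your invariant~(iv) that the backbone is $x$-monotone with extension direction in a narrow cone; your version is in fact more explicit about why distant backbone vertices and their leaves cannot interfere, a point the paper leaves implicit.
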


\section{Weak \udcs of Lobsters on the Triangular Grid}\label{algo-section:lobster}
We have shown that recognition of \udrs is \NP-hard for outerplanar graphs and linear-time solvable for caterpillars, which mirrors the results for \udcs and \wudcs; it leaves the recognition complexity for (non-embedded) trees as an open question for both \udrs and \udcs.
For \wudcs, however, recognition has been proven \NP-hard for trees~\cite{cleve-20}.
In order to investigate the complexity of \wudcs further, we zoom in on the gap between trees and caterpillars and investigate the graph class of lobsters.

The \emph{spine} 
of a \wudc of a lobster $G$ is the polyline defined by connecting the centers of all disks belonging to the vertices of $B_G$ in order.
A \wudc is \emph{straight}, if its spine is a straight line segment.
Similarly, a \wudc is $x$- or $y$-monotone, if its spine is $x$- or $y$-monotone.
Since we consider weak \udcs with contacts between non-adjacent disks permitted, we focus our attention on \wudcs placed on a triangular grid (similarly to previous work on \wudcs \cite{cleve-20}).

\subsection{Straight Backbone Lobsters}
Since any caterpillar $G$, admits a \wudc if and only if it admits a straight \wudc \cite{cleve-20} we investigate 
lobster graphs, which admit a straight \wudc.
These are not all lobsters, since any simple lobster graph containing a non-backbone vertex of degree 6 only admits a non-straight \wudc.
We observe that already for this restricted subclass, a greedy placement scheme similar to Cleve's approach~\cite{cleve-20} for caterpillars is not possible; again shown by an example.

\begin{figure}[tb]
	\centering
	\begin{subfigure}{.43\linewidth}
		\centering
		\includegraphics[page=1]{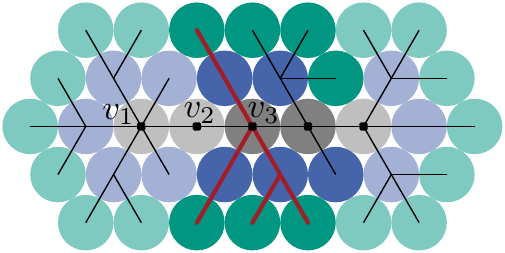}
		\subcaption{$G$ has 18 possible \wudcs}
	\end{subfigure}
	\quad
	\begin{subfigure}{.47\linewidth}
		\centering
		\includegraphics[page=2]{figures/Lobsters/not_greedy.pdf}
		\subcaption{$G'$ has 12 possible \wudcs}
	\end{subfigure}
	\caption{The subgraphs of $G$ and $G'$ induced by their first three backbone vertices are equal, however, depending on the following vertices a different realization of the neighbors of $v_3$ is necessary.}
	\label{fig:lob_not_greedy_graphs}
\end{figure}

We specify two lobster graphs $G$ and $G'$,
see \cref{fig:lob_not_greedy_graphs}.
It can be checked via exhaustive enumeration that $G$ admits 18 different \wudcs, while $G'$ admits only 12.
The subgraphs induced by their first three backbone vertices are identical, however the realization of the descendants of $v_3$ (highlighted in red) is unique for both graphs (up to symmetry) and dependent on the structure of the graphs beyond this point.
We can therefore not simply scan over the backbone in a greedy manner and fix all positions for the disks of descendants of a backbone vertex and then continue on to the next.
It is, however, still possible to do this in linear time with dynamic programming.
The requirements for this are actually less strict, as it is already sufficient to have a strictly $x$-monotone rather than a straight backbone, which we show in the next section.

\subsection{Monotone Weak UDCs}

If we can guarantee that a lobster can be realized as a strictly $x$-monotone \wudc, we can compute such a \wudc with a linear-time dynamic programming algorithm.
The dynamic program uses the following three observations.

\begin{restatable}[$\star$]{observation}{dynprogB}\label{obs:dyn_prog_2}
	The number of possible placements of a backbone vertex $v_i$ and its descendants is constant for a fixed position of $v_i$.
\end{restatable}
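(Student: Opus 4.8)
The plan is to bound the number of ways the descendants of a single backbone vertex $v_i$ can be placed on the triangular grid, once the grid position of $v_i$ itself is fixed, and to argue this bound is a universal constant independent of $n$ and of the structure of $G$.

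First I would observe that a \wudc on the triangular grid places every disk center at a grid point, and two disks touch exactly when their centers are at grid distance $1$. Hence the disk of $v_i$ has at most $6$ grid neighbors, and only those $6$ positions are available for the (at most $6$) children of $v_i$ in the descendant subtree. Since every descendant is within distance $2$ of $v_i$ in the tree (the internal vertices form a caterpillar, so descendants of a backbone vertex induce a star of depth $\le 2$: children of $v_i$, and leaves attached to those children), every descendant disk center lies within graph-distance $2$ of $v_i$, hence within Euclidean distance $2$ of the center of $v_i$ on the grid. The number of grid points within distance $2$ of a fixed grid point is a fixed finite number (it is $19$: the center, its $6$ neighbors, and the $12$ points at distance $\sqrt 3$ or $2$). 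So the whole descendant block of $v_i$ must be placed inside a fixed set $S$ of $19$ grid points, and a placement is just an assignment of descendant vertices to distinct points of $S$ (with the constraint that the assignment is a valid \wudc). The number of injections from a set of at most, say, $31$ descendants (bounded because $v_i$ has at most $6$ children, each with at most $5$ further leaves, minus shared degree) into $S$ is at most $|S|^{|S|}$, a constant. That already yields the claim.

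The key steps, in order: (1) recall that centers sit on grid points and adjacency equals unit grid distance; (2) use the lobster structure to bound the tree-depth of the descendant block of $v_i$ by $2$, hence bound its geometric diameter from $v_i$ by $2$; (3) bound by a constant the number of grid points in a radius-$2$ ball, hence the size of the candidate region $S$; (4) conclude that the number of valid placements is at most the number of injections of the descendant vertices into $S$, which is bounded by a function of the constants in (3) only, hence constant.

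The main obstacle — really a bookkeeping point rather than a genuine difficulty — is making sure the bound is truly uniform: one must note that although a lobster may have arbitrarily many descendants hanging off a backbone vertex in general, any backbone vertex admitting \emph{any} \wudc has degree $\le 6$ and each of its children has degree $\le 6$, so the descendant block that can possibly be realized has a constant number of vertices, and vertices beyond that immediately kill realizability (so the "number of placements" is $0$, still constant). One should also be mildly careful that "placement of $v_i$ and its descendants" is counted up to nothing — it is an absolute placement once $v_i$'s position is pinned — so no symmetry quotient is needed; the finiteness of $S$ does all the work. I would state the constant loosely (e.g.\ "at most $19^{19}$") rather than optimize it, since only constancy is needed for the dynamic program.
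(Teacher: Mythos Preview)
Your proposal is correct and follows essentially the same approach as the paper: both arguments fix the position of $v_i$, observe that all descendants (depth at most $2$ in a lobster) must land in a constant-size neighborhood $S$ of grid points around $v_i$, and conclude that only constantly many placements are possible. The paper's version is terser---it simply asserts the existence of a constant-size set $S(p)$ and notes that placements correspond to subsets of it---whereas you spell out the $19$-point ball, the degree bound forcing the descendant block to have bounded size (or zero placements), and the injection count; but the underlying idea is identical.
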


\begin{restatable}[$\star$]{observation}{dynprogD}\label{obs:dyn_prog_3}
    For a fixed grid position, the number of backbone vertices of a strictly $x$--monotone \wudc, who can occupy this position by themselves or with a descendant is constant. Moreover, the distance in graph between the first and the last such vertex is constant.
\end{restatable}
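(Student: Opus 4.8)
The plan is to prove \cref{obs:dyn_prog_3} by exploiting the geometric rigidity forced by strict $x$-monotonicity of the spine together with the constant-size structure of each backbone vertex's descendant block established in \cref{obs:dyn_prog_2}. First I would fix a grid cell $p$ on the triangular grid and ask which backbone vertices $v_i$ could possibly have a disk occupying $p$, either as $v_i$ itself or as one of its descendants. The key quantitative fact is that the subtree hanging off a single backbone vertex has bounded diameter: since $v_i$ has constant degree (at most $6$ on the triangular grid) and the lobster structure means descendants of a backbone vertex form a tree of depth at most $2$, the disks of $v_i$ and its descendants all lie within a ball of constant radius $r_0$ (in the grid metric) around $v_i$. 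Hence if a disk of $v_i$'s block occupies $p$, then $v_i$ itself sits within grid-distance $r_0$ of $p$.

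Next I would use strict $x$-monotonicity to bound how many backbone vertices can sit within grid-distance $r_0$ of $p$. Because the spine is strictly $x$-monotone on the triangular grid, consecutive backbone vertices strictly increase in $x$-coordinate; on the triangular grid this means the $x$-coordinate increases by at least one grid unit (or a fixed fraction thereof, depending on the edge directions used) between $v_i$ and $v_{i+1}$. Therefore the backbone vertices lying in the radius-$r_0$ ball around $p$ have $x$-coordinates confined to an interval of length $2r_0$, and since their $x$-coordinates are strictly increasing along the backbone, there can be at most $O(r_0)$ of them, a constant. This simultaneously bounds the number of candidate vertices and shows that the first such vertex $v_a$ and the last such vertex $v_b$ satisfy $b - a = O(r_0)$, i.e., their distance along the backbone is constant — which is exactly the ``moreover'' part of the statement.

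Putting these together: the set of backbone vertices whose block can occupy a fixed grid position $p$ is contained in $\{v_a, v_{a+1}, \dots, v_b\}$ with $b-a$ bounded by a constant depending only on the maximum degree and the depth-$2$ structure of a lobster; hence both the count and the backbone-distance spread are constant. I would state the explicit constant bound in terms of $r_0$ and the minimum per-step $x$-increment, then remark that \cref{obs:dyn_prog_2} guarantees each such $v_i$ contributes only constantly many distinct placements of its block through $p$, so the total number of (vertex, placement) pairs realizing $p$ is constant as well.

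The main obstacle I anticipate is nailing down the claim that strict $x$-monotonicity forces a uniform positive lower bound on the per-step $x$-increment between consecutive backbone disks on the triangular grid. One must rule out a degenerate situation where backbone edges use grid directions that are ``almost vertical,'' letting many backbone vertices crowd into a narrow vertical strip; this is where the discreteness of the triangular grid (only finitely many edge directions, each with $x$-component either zero or bounded away from zero) does the real work, and strict monotonicity excludes the zero-$x$-component directions. I would handle this by enumerating the at most six triangular-grid edge directions, observing that strict $x$-monotonicity forbids the two vertical ones, and noting the remaining four all have $|\Delta x|$ equal to a fixed positive constant of the grid. The rest is the straightforward packing/counting argument sketched above.
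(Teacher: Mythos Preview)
Your proposal is correct and follows essentially the same approach as the paper: bound the distance from a backbone vertex to any of its descendants by a constant (the paper uses radius $2$, coming from the depth-$2$ lobster structure), then use strict $x$-monotonicity on the discrete triangular grid to conclude that only constantly many backbone vertices can fall in the resulting constant-width $x$-strip around $p$, which simultaneously bounds the count and the backbone distance between the first and last such vertex. Your treatment is more explicit than the paper's (in particular your discussion of the per-step $x$-increment via enumerating grid directions), but the underlying argument is the same.
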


\begin{restatable}[$\star$]{observation}{dynprogE}\label{obs:dyn_prog_5}
    Let $C$ be a sufficiently large constant, and let $A, B$ be two \wudc, whose last $C$ backbone vertices have placed themselves and their descendants in such a manner, that the pattern of occupied grid positions and the placement of the last backbone vertex is equivalent up to translation and rotation.
    Then any extension graph $C$, which can be appended to $A$, s.t., the combined graph admits a \wudc can also be appended to $B$, s.t., their combined graph admits a \wudc and vice versa.
\end{restatable}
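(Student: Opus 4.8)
The plan is a \emph{locality} argument (of Myhill--Nerode type). Writing $E$ for the extension graph (denoted $C$ in the statement; I rename it to avoid the clash with the constant $C$), appending $E$ to $A$ means placing the vertices of $E$ --- a chain of new backbone vertices $v_{\ell+1},v_{\ell+2},\dots$ together with their descendants, joined to $A$'s last backbone vertex $v_\ell$ only by the edge $v_\ell v_{\ell+1}$ --- as additional grid disks that, together with the disks of $A$, form a strictly $x$-monotone \wudc. I will show that whether such a placement exists depends only on the disks of the last $C$ backbone vertices of $A$ and their descendants (the \emph{window} $W_A$), which is exactly the data that $A$ and $B$ share up to the assumed congruence $g$ (which in particular maps $d(v_\ell)$ to $d(b^*)$, the last backbone disk of $B$). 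The congruence $g$ then transports a valid extension of $A$ to one of $B$ and conversely.

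The technical heart is a separation estimate. Let $r$ be the distance between the centres of two touching unit disks, so two grid disks overlap or touch only if their centres are at distance at most $r$, and let $\delta>0$ be the smallest positive horizontal displacement of an edge of the triangular grid. Strict $x$-monotonicity forces every backbone edge to point in a grid direction of positive horizontal component, so $x(v_{i+1})\ge x(v_i)+\delta$; and since in a lobster every descendant of a backbone vertex $u$ is within two edges of $u$, every disk associated with $u$ has its centre within horizontal distance $2r$ of $d(u)$. Hence every disk of $A$ associated with $v_j$ has $x$-coordinate at most $x(v_\ell)-(\ell-j)\delta+2r$, whereas every new disk of any realizable extension has $x$-coordinate at least $x(v_\ell)-2r$ (all new backbone vertices lie to the right of $v_\ell$). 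So an old disk associated with $v_j$ and a new disk can be within distance $r$ only if $(\ell-j)\delta\le 5r$, and choosing $C:=\lceil 5r/\delta\rceil+1$ makes every disk of $A$ coming from a backbone vertex outside the last $C$ irrelevant: it can neither overlap a new disk nor be required to touch one, as the only edge across the cut is $v_\ell v_{\ell+1}$ and $d(v_\ell)\in W_A$. Thus a placement of the new disks together with the position of $d(v_\ell)$ is feasible against $A$ iff it is feasible against $W_A$ alone, and likewise for $B$; an alternative route to this window bound is \cref{obs:dyn_prog_3}.

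With the window bound in hand the rest is short: given a valid extension of $A$ realized on the grid, apply $g$ to all its new disks. Since $W_A$ and the window $W_B$ of $B$ are both grid-aligned and, for $C$ large, pin down the grid --- they contain a unit equilateral triangle of grid points whenever the backbone bends in the window or some backbone vertex in it carries a descendant, and in the remaining degenerate case of a straight chain of backbone disks carrying no descendants in the window the required matching of $W_A$ with $W_B$ can still be realized by a symmetry of the triangular grid --- the congruence $g$ is a grid symmetry. Hence $g$ sends the new disks to grid positions; they remain pairwise interior-disjoint, still realize every internal edge of $E$, and touch $d(b^*)$ exactly as $v_\ell v_{\ell+1}$ requires; and by the window bound they are consistent with all of $B$, not merely with $W_B$. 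So they extend $B$ to a \wudc of $B\cup E$. Swapping $A$ and $B$ and using $g^{-1}$ gives the converse.

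The main obstacle is the separation estimate of the second paragraph: it is precisely strict $x$-monotonicity (a uniform forward step $\delta$ per backbone vertex on the grid) together with the bounded reach of descendants in a lobster that confines all relevant interaction to a window of constant backbone-length, so that a constant-size state determines the continuations. Without $x$-monotonicity, descendants of far-apart backbone vertices could be placed arbitrarily close and the argument collapses --- consistent with the \NP-hardness of the analogous question for general trees~\cite{cleve-20}. Making the constant explicit and verifying that $g$ is a genuine grid symmetry are routine bookkeeping once the separation is in place.
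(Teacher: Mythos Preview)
Your approach matches the paper's: transport the extension's placement from $A$ to $B$ by the given congruence and observe that it remains valid because the window patterns coincide. The paper's proof is a four-line sketch that simply translates the extension and asserts validity ``since $B$ and $A$ had the exact same set of already occupied grid positions''; your separation estimate (showing that in a strictly $x$-monotone realisation disks outside the last $C$ backbone vertices cannot reach the extension) and your care that $g$ is in fact a symmetry of the triangular grid supply precisely the rigour the paper leaves implicit.
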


With these three claims we obtain the following Lemma.

\begin{restatable}[$\star$]{lemma}{dynprogmonotone}\label{lem:dyn_prog_monotone}
    Using dynamic programming it can be checked in linear time if a lobster graph admits an $x$-monotone \wudc on the triangular grid.
\end{restatable}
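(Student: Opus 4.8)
The plan is to set up a left-to-right dynamic program over the backbone $B_G = v_1, \dots, v_m$, where the state at backbone vertex $v_i$ records only a bounded-size ``window'' of the current partial $x$-monotone \wudc: the position of $v_i$ on the triangular grid (normalized so that, say, $v_i$ sits at the origin), together with the pattern of occupied grid cells in a constant-width strip around $v_i$ comprising the disks of $v_i$ and its descendants and the disks of the last $C$ backbone vertices and their descendants. By \cref{obs:dyn_prog_3}, only a constant number of candidate relative positions of $v_{i-1}$ are compatible with a given position of $v_i$ in a strictly $x$-monotone realization, and by \cref{obs:dyn_prog_2} each backbone vertex has only a constant number of ways to place itself and its descendants once its own position is fixed; hence the number of reachable states per backbone index is constant, and the transition from index $i-1$ to index $i$ enumerates a constant number of (window, placement) pairs, checking local non-overlap (and the weak-contact adjacency conditions) within the window only. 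Because contacts are only local on the triangular grid, this local check is sufficient to certify global validity of the combined representation.

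First I would make the window precise: fix the constant $C$ from \cref{obs:dyn_prog_5}, and let the state after processing $v_i$ be the equivalence class, up to translation and rotation, of the union of disks of $v_{i-C+1}, \dots, v_i$ together with all their descendants, with $v_i$'s disk distinguished. \cref{obs:dyn_prog_2,obs:dyn_prog_3} give that there are only $O(1)$ such classes, since the whole window spans only $O(1)$ grid columns and each column holds $O(1)$ grid cells. Next I would define the transition relation: from a state $S$ after $v_{i-1}$, for each of the $O(1)$ admissible placements of $v_i$ relative to $v_{i-1}$ (those keeping strict $x$-monotonicity and respecting the degree and contact constraints at $v_{i-1}$ and $v_i$) and each of the $O(1)$ placements of $v_i$'s descendants, check that the newly added disks do not improperly overlap any disk recorded in $S$; if so, emit the updated window state $S'$. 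The algorithm keeps the set of reachable states after each index (a subset of a constant-size universe), and the lobster admits a strictly $x$-monotone \wudc iff some state is reachable after $v_m$.

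The correctness argument has two directions. For soundness, a sequence of successful transitions yields, by gluing the local placements along the backbone, a set of unit disks on the triangular grid; non-adjacent disks are allowed to touch, and the only thing to rule out is improper overlap or a missing required contact, but any two disks at distance less than $2$ on the triangular grid lie within $O(1)$ backbone indices of each other (again by the monotonicity bound of \cref{obs:dyn_prog_3}), so the window checks catch every potential violation. For completeness, given an actual strictly $x$-monotone \wudc, restrict it to the prefix ending at $v_i$ and to the last-$C$-backbone-vertex window; the sequence of these windows is exactly a run of the transition relation, so the target state is reached. The linear running time is immediate: $m$ steps, each doing $O(1)$ work over a constant-size state space, and the preprocessing (reading the lobster, identifying $B_G$ and the descendants) is linear.

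The main obstacle I anticipate is the completeness/soundness gluing argument — specifically, justifying that a constant-width window genuinely suffices, i.e., that \cref{obs:dyn_prog_5} really does let us forget everything older than $C$ backbone steps. This rests on the fact that on the triangular grid a strictly $x$-monotone spine advances by at least one column every bounded number of backbone vertices (the content of the ``distance in graph between the first and last such vertex is constant'' clause of \cref{obs:dyn_prog_3}), so disks that could touch are necessarily close in backbone order; making the constant $C$ explicit in terms of the maximum descendant ``reach'' and the monotonicity slope, and verifying that the recorded window is closed under the transition (no placement ever needs to consult a disk outside the window), is the delicate part. Everything else is bookkeeping.
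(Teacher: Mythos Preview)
Your proposal is correct and follows essentially the same approach as the paper: a left-to-right dynamic program over the backbone whose state is (up to translation/rotation) the occupation pattern of the last $C$ backbone vertices and their descendants together with the position of the last backbone vertex, with transitions enumerating the $O(1)$ placements of the next backbone vertex and its descendants and checking only local consistency; the constant state space is justified via \cref{obs:dyn_prog_2,obs:dyn_prog_3,obs:dyn_prog_5} exactly as you outline. Your soundness/completeness gluing argument is in fact somewhat more explicit than the paper's own write-up, and the ``main obstacle'' you flag is precisely what \cref{obs:dyn_prog_3,obs:dyn_prog_5} are there to discharge.
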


\subsection{General Lobsters}
The algorithm sketched in the previous section recognizes lobster graphs, which admit a strictly $x$-monotone \wudc in linear time.
Now we set out to prove that every lobster which admits a \wudc also admits a strictly $x$-monotone \wudc.
We prove this by induction.
The induction step is done as a computer-assisted proof. See \arxivTHENgd{Appendix~\ref{sec:app:automated} for details}{Appendix~D in the full version~\cite{bhore2021recognition}}.

\begin{lemma}\label{lem:x_mon_lobster}
    Every lobster graph, which admits a \wudc on the triangular grid, also admits an $x$-monotone \wudc on the triangular grid.
\end{lemma}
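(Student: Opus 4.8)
The plan is to prove \cref{lem:x_mon_lobster} by induction on the number of backbone vertices of the lobster~$G$, using the ``turning points'' of a non-monotone spine as the quantity to eliminate. Starting from an arbitrary \wudc of~$G$ on the triangular grid, I would first normalise it so that its spine is a lattice polyline whose every edge is one of the six unit grid directions. Then I would identify a maximal monotone prefix of the spine: a longest initial segment of backbone vertices $v_1,\dots,v_m$ whose centres are $x$-monotone. If $m$ equals the length of the backbone we are done, so assume the spine first ``turns the wrong way'' at some vertex $v_{m+1}$. The goal of the induction step is to locally modify the representation near this first turning point so that either the monotone prefix gets strictly longer, or the total number of turning points strictly decreases, while keeping a valid \wudc of the whole graph.

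The core of the induction step is the local surgery at the first turning point, and this is where I would invoke the computer-assisted case analysis that the excerpt refers to. The idea is that, by \cref{obs:dyn_prog_2}, the local picture around a backbone vertex and its descendants has only a constant number of realisations, and by \cref{obs:dyn_prog_3} the set of backbone vertices that could plausibly interact with a given grid cell has bounded diameter in~$G$. Hence the ``conflict region'' around the first turning point is described by a bounded-size pattern of occupied grid cells together with bounded-length pieces of backbone on either side. I would enumerate all such bounded patterns (this is the computer search) and, for each one that corresponds to a turn away from $x$-monotonicity, exhibit an explicit alternative placement of that bounded piece of backbone (and its descendants) whose spine continues monotonically, and which occupies a set of cells that does not conflict with the untouched remainder of the representation. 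Crucially, one must check that the rerouted piece still attaches correctly to $v_m$ on the left and to the (translated/rotated copy of the) tail on the right; here \cref{obs:dyn_prog_5} is what guarantees that, since the interface pattern of the last $C$ backbone vertices of the modified prefix can be made equivalent up to congruence to an interface that the original tail was known to accept, the tail can be re-attached without any further changes.

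After the surgery, I would argue termination of the induction: each step either increases the length of the monotone prefix by at least one (bounded) or reduces the number of wrong-direction turns, so after finitely many steps the spine becomes fully $x$-monotone, and then one final pass perturbs any collinear backbone edges (using the $\varepsilon$-offset trick already employed in \cref{algo-section:caterpillar}, or simply a slight lattice reshuffle within the bounded freedom of \cref{obs:dyn_prog_2}) to make it \emph{strictly} $x$-monotone as required by \cref{lem:dyn_prog_monotone}. Combined with that lemma, \cref{lem:x_mon_lobster} yields \cref{thm:dyn_prog_lobsters}.

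The main obstacle I anticipate is making the local surgery genuinely \emph{local}: a priori, many non-adjacent disks far along the backbone could be pressed against the conflict region, so one has to be careful that the bounded ``conflict region'' really does capture all interactions. This is exactly what \cref{obs:dyn_prog_3} is designed to control, but turning that observation into a clean statement of the form ``only backbone vertices within graph-distance $O(1)$ of the turning point can occupy cells within geometric distance $O(1)$ of it'' — and then bounding how much of the representation the rerouting may disturb — is the delicate part, and it is the reason the induction step is discharged by an exhaustive computer-assisted enumeration of the constantly many conflict patterns rather than by a hand argument.
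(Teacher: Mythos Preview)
Your surgery-based plan differs from the paper's, and the difference is exactly where it breaks. The paper does \emph{not} start from an arbitrary \wudc and repair turning points; it inducts directly on the backbone length~$k$. The induction hypothesis furnishes an $x$-monotone \wudc of the length-$k$ prefix, and the computer search only has to append the single new vertex $v_{k+1}$ (with its descendant structure~$\gamma$) at one of the three forward grid positions, given the occupation pattern~$\theta$ left by that prefix near~$v_k$. Because the prefix is already strictly $x$-monotone, \cref{obs:dyn_prog_3} legitimately bounds~$\theta$, and---crucially---there is no tail beyond~$v_{k+1}$ to worry about at all. The enumeration is then over triples $(\gamma,\delta_6,\theta)$ versus $(\gamma,\delta_3,\theta)$, and the check is that whenever a $\delta_6$ exists, some $\delta_3$ does too.

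Your argument, by contrast, invokes \cref{obs:dyn_prog_3} to bound the conflict region around the first turning point of a \emph{non}-monotone \wudc. That observation is stated and proved only for strictly $x$-monotone representations: its proof uses that successive backbone disks march strictly to the right, so only $O(1)$ of them lie within horizontal distance~$2$ of any fixed cell. Once the spine is allowed to fold back, the tail $v_{m+1},v_{m+2},\dots$ can weave arbitrarily close to~$v_m$, and backbone vertices of unbounded graph-distance from the turning point can deposit descendants inside your would-be ``conflict region.'' The set of local patterns is therefore not finite and no computer search can exhaust it. The same non-monotonicity undermines your appeal to \cref{obs:dyn_prog_5} for re-attaching the tail: the proof of that observation transplants the appended piece by a rigid motion under the assumption that it only touches the last~$C$ backbone cells of the host, which fails when the (still non-monotone) tail, once translated or rotated, can collide with the unchanged prefix arbitrarily far from the interface. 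Inducting on the backbone length instead of on a complexity measure of a given realisation is precisely what lets the paper avoid both of these obstacles.
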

\begin{proof}
    We use induction on the length of the backbone.
    The base cases are backbones of length one, two or three.
    The spine of any realization is at most a polyline consisting of two segments and can therefore always be rotated to be $x$-monotone.
    The induction hypothesis is, that any lobster graph, with a backbone of length $k$ admits an $x$-monotone \wudc.
    In the induction step we need to show that any extension to a graph $G'$ with a backbone of length $k+1$, can be realized as a \wudc if and only if it can be realized in an $x$-monotone way.
    The extensions are done by appending a single new backbone vertex $v_{k+1}$, whose descendants are specified as a sorted list of the degrees of its direct neighbors.
    Since the total degree of every vertex is at most 6, the set $\Gamma$ of options for $v_{k+1}$ is constant (\cref{obs:dyn_prog_2}).
    Let $\Theta$ be the set of possible combinations of already occupied grid positions where $v_{k+1}$ is placed such that the spine remains $x$-monotone.
    Let $\Delta_6$ and $\Delta_3$ be the sets of possible placements of disks of descendants of $v_{k+1}$, when $v_{k+1}$ is placed at one of six (in the unrestricted case) or one of three (in the strictly $x$-monotone case) positions.
    Therefore we can enumerate all triples $(\gamma \in \Gamma, \delta_3 \in \Delta_3, \theta \in \Theta)$ and $(\gamma \in \Gamma, \delta_6 \in \Delta_6, \theta \in \Theta)$ and check if they are realizable.
    By exhaustive enumeration,\footnote{The cases were reduced, by considering symmetry and infeasibility beforehand. Enumeration was done in the form of a computer-assisted proof. Details are explained in \arxivTHENgd{\cref{sec:app:automated}}{Appendix D of the full version~\cite{bhore2021recognition}}} we have found that for every possible $(\gamma, \delta_6, \theta)$, which is realizable, we can find a suitable $(\gamma, \delta_3, \theta)$, which is realizable, too.
    This concludes the induction step.
\end{proof}

From \cref{lem:dyn_prog_monotone,lem:x_mon_lobster}, we conclude the following theorem.

\begin{theorem}\label{thm:dyn_prog_lobsters}
    It can be decided linear time if a lobster graph admits a \wudc on the triangular grid.
\end{theorem}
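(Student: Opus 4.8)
The plan is to obtain \cref{thm:dyn_prog_lobsters} by directly chaining the two preceding lemmas. Given an input lobster $G$, I would first extract its backbone $B_G$ in linear time (one round of leaf removal turns $G$ into a caterpillar $G''$, and $B_G$ is the backbone path of $G''$, obtained by a second round of leaf removal), and then run the dynamic program of \cref{lem:dyn_prog_monotone} to test whether $G$ admits a strictly $x$-monotone \wudc on the triangular grid. By \cref{lem:x_mon_lobster}, a lobster admits a \wudc on the triangular grid if and only if it admits an $x$-monotone one, so the dynamic program's output is exactly the answer to the recognition problem; since that program runs in linear time, so does the whole procedure. Thus the entire substance of the theorem lies in the two lemmas, and that is where I would spend the effort.

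For \cref{lem:dyn_prog_monotone}, the key steps are to nail down the three observations and then package them into a DP over the backbone. \cref{obs:dyn_prog_2} follows from the degree bound of $6$ forced by the packing property: a backbone vertex has constantly many descendants, each descendant occupies one of constantly many triangular-grid cells near its parent, so for a fixed position of the backbone vertex there are only constantly many valid local configurations. \cref{obs:dyn_prog_3} uses $x$-monotonicity to bound how much the spine can ``backtrack'', hence how far apart in $B_G$ two vertices can be if they or their descendants share a grid cell. \cref{obs:dyn_prog_5} is the Markov-type property that whether an extension graph can be appended while keeping a \wudc depends only on a constant-size window of the most recently placed backbone vertices --- specifically, the pattern of occupied cells near the frontier and the placement of the last backbone vertex, up to rigid motion. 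Taking that window as the DP state gives a transition table of constant size and hence linear time; the point requiring care is that the state must record which frontier cells are already blocked, so that a later backbone vertex or descendant cannot be placed into an occupied cell.

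The harder lemma is \cref{lem:x_mon_lobster}, the straightening statement, proved by induction on $|B_G|$. The base cases (backbone length at most three) are immediate, since the spine is then a polyline of at most two segments and can be globally rotated to be $x$-monotone. In the induction step one appends a new backbone vertex $v_{k+1}$ with a prescribed sorted list of descendant degrees and must show that whenever the unrestricted placement --- one of six rotations of the local frame --- is realizable together with a descendant packing, one of the three rotations compatible with $x$-monotonicity is realizable too, with the descendants repacked accordingly. Because the total degree is at most $6$, the profile set $\Gamma$, the set $\Theta$ of $x$-monotone-compatible occupied-cell patterns, and the descendant-placement sets $\Delta_6, \Delta_3$ are all of constant size, so this reduces to a finite enumeration over realizable triples $(\gamma,\delta_6,\theta)$, pruned by symmetry and obvious infeasibility and discharged as a computer-assisted check. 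The main obstacle --- and the delicate modeling point --- is making precise, via \cref{obs:dyn_prog_5}, exactly what ``can be appended while remaining a \wudc'' means relative to a fixed finite window of the existing representation, and choosing the window constant $C$ large enough that the finite check genuinely certifies the global statement while the enumeration stays finite.
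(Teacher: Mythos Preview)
Your proposal is correct and follows essentially the same approach as the paper: the theorem is obtained by combining \cref{lem:dyn_prog_monotone} and \cref{lem:x_mon_lobster}, with the substance residing in those two lemmas exactly as you describe. Your sketches of the lemma proofs---the constant-window DP built on the three observations, and the induction with a computer-assisted finite enumeration for the straightening lemma---match the paper's arguments as well.
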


\section{Conclusions}
We have investigated the existing complexity gap for the recognition problem of \udrs and \wudcs.
In addition to the open problems for various graph classes in different settings (recall \cref{tab:results_state_open} in \cref{sec:intro}) there are two main open questions.
First, we have investigated \wudcs of lobsters on the triangular grid, however, it is not entirely clear if every lobster, which admits a \wudc, also does so on the grid.
Second, it seems reasonable to assume that our enumeration approach can be extended to graph classes beyond lobsters, which admit a \wudc at least on the triangular grid. In fact, we conjecture that every class of trees, in which each vertex has bounded distance to a central backbone in extension of caterpillars and lobsters, can be recognized in polynomial time by such an approach.
\subsubsection*{Acknowledgements.}
We thank 
Jonas Cleve and  Man-Kwun Chiu for fruitful discussions about the project during their research visits in Vienna.

\bibliography{39}
\clearpage
\arxivTHENgd{}{\end{document}}
\appendix

\section{Omitted Details of Section~\ref{hardness-section}}\label{hardness-appendix}

Bowen et al.~\cite{bdlrst-rscplrudct-15} proved that recognizing unit disk contact graphs is \textsf{NP}-hard for embedded trees, via a reduction from planar 3-SAT, which uses an auxiliary construction formulated as a realization of a polygonal linkage. 
Polygonal linkages are explained in \cref{sec:poly_linkage-appendix}.
The details of this auxiliary structure are explained in \cref{sec:auxiliary_structure-appendix}.
Then a tree, whose \udc is an approximation of the auxilliary structure and which mimics the shape and behaviour of it, is constructed.
This construction is summarized in \cref{sec:approx_construction-appendix}.

\subsection{Polygonal Linkages}\label{sec:poly_linkage-appendix}
Bowen et al.~\cite{bdlrst-rscplrudct-15} considered multiple problems in their work, one of which is the \emph{polygonal linkage realizability} (PLR) problem.
A polygonal linkage is a set $\mathcal{P}$ of convex polygons and a set $H$ of hinges. One hinge is a set of two or more points on the boundaries of distinct polygons.
A polygonal linkage is realizable, if every $p\in\mathcal{P}$ can be placed in the plane, s.t.
\begin{itemize}
	\item all polygons are interior disjoint
	\item for every hinge $h\in H$, all points of the hinge coincide and
	\item a predefined cyclic order of adjacent polygons around every hinge is kept.
\end{itemize}
In our case and in the case of this reduction, hinges are only of size two, i.e., a realization will identify exactly two points on distinct polygons per hinge.
This means that cyclic order around hinges is always kept by default.
A polygonal linkage and its realization are shown in \cref{fig:polyLinkageEx}.

\begin{figure}
	\centering
	\begin{minipage}[t]{.45\linewidth}
		\centering
		\includegraphics[page=1, width=.8\linewidth]{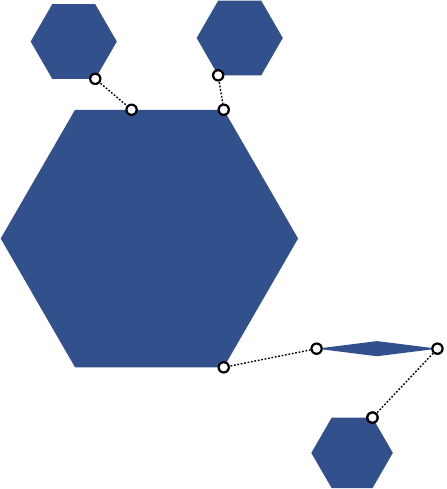}
		\subcaption{A set $\mathcal{P}$ of polygons with hinges}
		\label{fig:polyLinkageEx_a}
	\end{minipage}
	\quad
	\begin{minipage}[t]{.45\linewidth}
		\centering
		\includegraphics[page=2, width=.8\linewidth]{figures/polyLinkageExample.pdf}
		\subcaption{A realization of (a)}
		\label{fig:polyLinkageEx_b}
	\end{minipage}
	\caption{ 
		An instance of \textsc{PLR} (a), where points belonging to the same hinge are indicated with a dotted line connection. A placement of (rotated and/or translated) copies of the polygons in $\mathcal{P}$ in the plane is a realization (b) if the points of the same hinge are identified.
	}
	\label{fig:polyLinkageEx}
\end{figure}

\subsection{Auxiliary Structure}\label{sec:auxiliary_structure-appendix}
The auxilliary structure mimics a hexagonal grid.
This grid-like structure is obtained by using a hexagonal tiling of the plane and then shrinking every hexagon by a small amount to obtain narrow channels of a fixed height 
between two hexagons, where $\varepsilon$ is a sufficiently small constant.
At the corners of the hexagons, three such channels meet to form a junction.
The union of all channels and junctions forms the grid-like structure.
In this grid, a representation of the incidence graph $G_\phi$ of the planar 3-SAT instance $\phi$ is fitted, see \cref{fig:bowen_auxilliary_a}.

A variable $v$ of $\phi$ is represented in this grid as an alternating cycle of channels and junctions, indicated with a grey fill in \cref{fig:bowen_auxilliary_a}.
In such a cycle the channels can be filled with smaller hexagons of height $h$, which are connected to the large hexagon on the side of the channel -- which is on the ``inside'' of the variable cycle -- via a junction.
In a channel, one corner of each of the small hexagons is connected to the large hexagon via a hinge, s.t., the small hexagon can be ``flipped'' around this junction.
Due to the chosen size, the hexagon can be realized in one of two states, see \cref{fig:bowen_auxilliary_b,fig:bowen_auxilliary_d}.
The distance of the hinges of neighboring small hexagons is chosen in such a way that the state of one hexagon determines the state of all hexagons in the channel, see  \cref{fig:bowen_auxilliary_d}.
At each junction, we add an even smaller hexagon with a hinge to the corner of that large hexagon, which is adjacent to the channels on either side of the junction in the variable cycle.
This propagates the state of the hexagons in one channel through the junction into the other channel and so throughout the entire cycle.
See \cref{fig:bowen_auxilliary_b,fig:bowen_auxilliary_d} for a detailed explanation.

Wire gadgets are alternating paths of channels and junctions, which use the same mechanism to propagate the state of the hexagons in the channels and therefore admit two states overall.

Wire gadgets can be connected to a variable cycle at every junction using the third unoccupied channel and adding a second very small hexagon in the junction.
A wire gadget is considered to transmit the value \textit{true}, if and only if, part of the first small hexagon of the first channel of the wire gadget enters into the junction.
By placing the small hexagon on one or the other corner (cf. \cref{fig:bowen_auxilliary_b,fig:bowen_auxilliary_d}) the truth value which is transmitted can be ``inverted'' if necessary.

For every clause in $\phi$, three such wire gadgets are connected to the variable cycles of the occurring variables and the wires are routed to meet in a junction. This junction contains a small hexagon connected to the corner of a large hexagon via a long and very thin rhombus (in place of a line segment), s.t., an overlap-free realization is only possible, if at least one connected wire gadget has no hexagon entering the junction and is therefore in a true state, see \cref{fig:bowen_auxilliary_c}.

In order to guarantee a somewhat rigid placement of these hexagons, the entire construction is surrounded by a set of six huge hexagons, and the hexagons acting as the faces of the hexagonal grid are column-wise connected (cf. \cref{fig:bowen_hinged_aux_structure}). 
This restricts the position of the large hexagons to an $N$-neighborhood, where $N$ is a polynomial of the number of variables and clauses in $\phi$~\cite{bdlrst-rscplrudct-15}.

Note that the connections between the polygons in the polygonal linkage induce a tree.
In particular note that we can replace the hexagons with outerplanar graphs and replace the hinges between them with paths of length one to three and the entire construction remains an outerplanar graph.
The same way we can replace all hexagons with trees and replace the hinges with vertex paths of length one to three and the entire construction remains a tree.

\begin{figure}
	\centering
	\includegraphics[page=3, width=.6\linewidth]{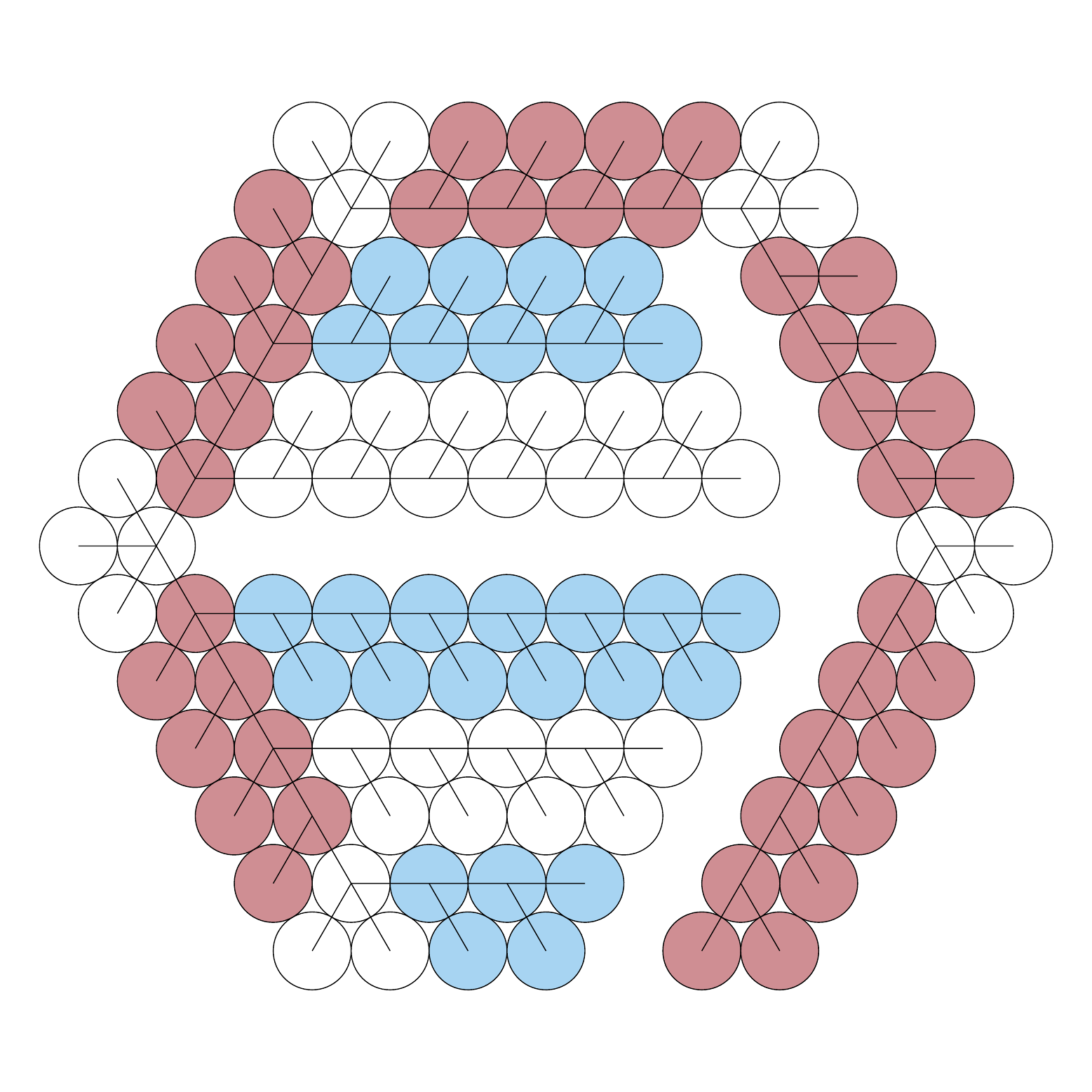}
	\caption{The composition of the rigid structure as a realization of a polygonal linkage.
		Hinges are indicated by points.
		The figure is a recreation of a similar figure in \cite{bdlrst-rscplrudct-15}, augmented with lines, which emphasize the tree-like structure of connections in the polygonal linkage.}
	\label{fig:bowen_hinged_aux_structure}
\end{figure}

For a more detailed description, a full construction and the proof of the semi-rigid placement we refer to the original paper of Bowen et al.~\cite{bdlrst-rscplrudct-15}.

\subsection{Approximating the Auxilliary Structure}\label{sec:approx_construction-appendix}
In order to prove the \NP-hardness for recognition of unit disk contact graphs, Bowen et al.~\cite{bdlrst-rscplrudct-15} created $\lambda$-stable approximations of the basic building blocks of the auxiliary structure (hexagons of varying sizes and long thin rhombi).
A graph is a $\lambda$-stable approximation of a polygon $P$, if the boundary union of all disks in its \udc is necessarily within a constant asymmetric Hausdorff distance $\lambda$ of a congruent copy of $P$.

Bowen et al.\ described the construction procedures for two graphs $T_k, T'_k$, which are $2$-stable approximations of a long thin rhombus and a regular hexagon. 
These two graphs are shown in \cref{fig:T_graphs}.
For the details of these constructions, we again refer to Bowen et al.~\cite{bdlrst-rscplrudct-15}.
The construction of our building blocks using outerplanar graphs in \cref{sec:outer} and embedded trees \cref{sec:trees} is based on these graphs.

\begin{figure}
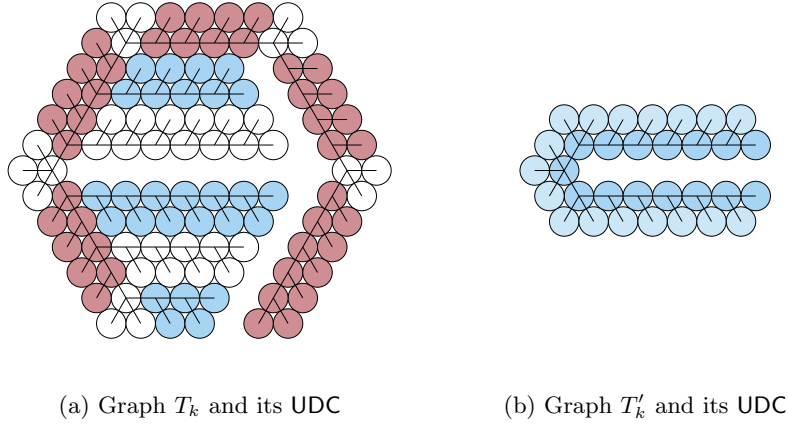

	\centering
	\begin{minipage}[t]{.45\linewidth}
		\centering
		\includegraphics[page=1, width=\linewidth]{figures/lambdaStableBowen.pdf}
		\subcaption{Graph $T_k$ and its \udc}
		\label{fig:T_k}
	\end{minipage}
	\quad
	\begin{minipage}[t]{.45\linewidth}
		\centering
		\includegraphics[page=2, width=\linewidth]{figures/lambdaStableBowen.pdf}
		\subcaption{Graph $T'_k$  and its \udc}
		\label{fig:T_prime_k}
	\end{minipage}
	\caption{
		The $2$-stable approximations of a long thin rhombus (a) and a regular hexagon (b).
		All Figures are recreations from Bowen et al.~\cite{bdlrst-rscplrudct-15}.
	}
	\label{fig:T_graphs}
\end{figure}

It remains to discuss how the hinges are modeled.
Two $\lambda$-stable approximations $G, G'$ of two polygons $P, P'$ are connected with a path of vertices of constant length, if there exists a hinge $h\in H$, with one point on the boundary of $P$ and the other on the boundary of $P'$.
The exact length of the path is dependent on the location of the hinge.
If the hinge is not placed on a corner of either polygon, they are simply connected via a single cut vertex, and with a path of length three otherwise, in order to facilitate more movement, which mimics the possibility of polygons to rotate around hinges.
The union of the \udc of two thus connected graphs $G, G'$, remains a constant factor approximation of a congruent copy of the realization of $P \cup P'$.

\subsection{Proofs of \cref{lem:rhombus_approx,lem:hexagon_approx}}\label{sec:app:outer_hardness}

To prove \cref{lem:rhombus_approx,lem:hexagon_approx} we need to establish some initial Lemmas.
First a ladder $L_k$ consists of two paths $v_1, v_2, \dots, v_k$ and $v_1', v_2', \dots, v_k'$ of vertices -- also called the \textit{outer} and \textit{inner} vertices respectively -- s.t., each pair $(v_i, v_i')$ is connected with an edge.
The blue vertices in \cref{fig:red_outer_graph_a} form a ladder, specifically $L_{17}$. A \udr of $L_{17}$ is shown in the blue disks of \cref{fig:red_outer_graph_c}.

\begin{lemma}
	\label{lem:ladder_embedding}
	Let $k\in \mathbb{N}$ and let $\mathcal{D}$ be a \udr of $L_k$.
	Then $\mathcal{E}_\mathcal{D}(L_k)$ induces the same embedding as shown in \cref{fig:red_outer_graph_a}, which is outerplanar.
\end{lemma}
\begin{proof}
	
	Clearly no disk can be placed inside a \udr of a $C_4$, without intersecting at least two disks of the $C_4$.
	Since every disk $d(v_i)$ of a vertex $v_i$, which is not part of the $C_4$ intersects at most one disk of the $C_4$, this is impossible and the outer face is fixed.
	Therefore the induced embedding is unique.
\end{proof}
Lemma~\ref{lem:ladder_embedding} implies that we can assign a clear ``outer side'', i.e., $d(v_1), \dots, d(v_k)$ and ``inner side'', i.e., $d(v'_1), \dots d(v'_k)$ to a \udr of $L_k$.

Next we want to ensure that any overall bend towards the outer side is impossible. 
For this we will augment $L_k$ with additional neighbors.
We restrict the ladders to an odd length and alternatingly add one and two leaf neighbors, which we will call extension neighbors.
The resulting graph can be seen in \cref{fig:red_outer_graph_a}.

\begin{lemma}\label{lem:bend_outer}
	Let $(v_1,v_2, v_3, v_4, v_5)$ be a chain of vertices on the outer side of a ladder $L_k$ with one, two, one, two and one extension neighbor respectively. Then the sum of angles $\gamma = \measuredangle v_3v_2v_1 + \measuredangle v_4v_3v_2 + \measuredangle v_5v_4v_3$ in a \udr of $L_k$ is smaller than $3\pi$, i.e., overall we bend more towards the inner side than towards the outer side.
\end{lemma}
\begin{proof}
	We will refer to the two extension neighbors of $v_2$ as $a$ and $a'$, both of which are placed on the outer side of the ladder.
	Similar we will call the extension neighbors of $v_4$, $b$ and $b'$.
	The singular extension neighbor of $v_3$, will be referred to as $c$.
	The naming is also shown in \cref{fig:bend_lemma_proof_positions}.
	Without loss of generality, we assume that the clockwise order on $v_2$ is $(v_1,a,a',v_3, v'_2)$ and analogue the clockwise order on $v_4$ is $(v_3, b, b', v_5, v_4')$.
	\cref{fig:bend_lemma_proof_positions,fig:bend_lemma_proof_realization} depict this placement and its \udr.
	
	\begin{figure}
		\centering
		\begin{subfigure}[t]{.24\linewidth}
			\centering
			\includegraphics{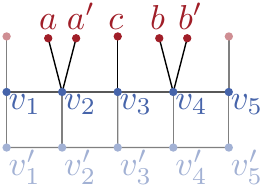}
			\subcaption{Arrangement}
			\label{fig:bend_lemma_proof_positions}
		\end{subfigure}
		\hfill
		\begin{subfigure}[t]{0.75\linewidth}
			\centering
			\includegraphics[page=6, scale=.9]{figures/bendLemmaProof.pdf}
			\subcaption{Increasing $\measuredangle v_4v_3v_2$ decreases $\measuredangle v_3v_2v_1$ and $\measuredangle v_5v_4v_3$}
			\label{fig:bend_lemma_proof_v3}
		\end{subfigure}
		\begin{subfigure}[t]{.24\linewidth}
			\centering
			\includegraphics[page=2]{figures/bendLemmaProof.pdf}
			\subcaption{\udr of (a)}
			\label{fig:bend_lemma_proof_realization}
		\end{subfigure}
		\hfill
		\begin{subfigure}[t]{.24\linewidth}
			\centering
			\includegraphics[page=3]{figures/bendLemmaProof.pdf}
			\subcaption{First position}
			\label{fig:bend_lemma_proof_extreme2}
		\end{subfigure}
		\hfill
		\begin{subfigure}[t]{.24\linewidth}
			\centering
			\includegraphics[page=4]{figures/bendLemmaProof.pdf}
			\subcaption{Second position}
			\label{fig:bend_lemma_proof_extreme1}
		\end{subfigure}
		\hfill
		\begin{subfigure}[t]{.24\linewidth}
			\centering
			\includegraphics[page=7]{figures/bendLemmaProof.pdf}
			\subcaption{Third position}
			\label{fig:bend_lemma_proof_extreme3}
		\end{subfigure}
		\caption{Figures for Lemma~\ref{lem:bend_outer}. 
			(a) Is the graph as stated in Lemma~\ref{lem:bend_outer}.
			(b) If $d(c)$ is moved outwards to facilitate an angle $\measuredangle v_4v_3v_2 = \pi + \varphi$, i.e., a local outwards bend, it enforces at the same time an inwards bend of $\varphi$ at $\measuredangle v_3v_2v_1$ and $\measuredangle v_5v_4v_3$.
			(c) A valid \udr of (a) including the $\varepsilon$ gaps between disks, when necessary.
			(d-f) Extreme positions of (c), s.t., $\measuredangle v_2v_3v_4 = \pi$ (d) or $\measuredangle v_2v_3v_4 = \frac{2\pi}{3}$ (e-f).
		}
	\end{figure}
	
	First observe that $\measuredangle v_1v_2a > \frac{\pi}{3}$, since $|{v_1v_2}|\leq 2, |{v_2a}|\leq 2$ and $|{v_1a}|> 2$.
	For the same reason we have $\measuredangle av_2a'> \frac{\pi}{3}$ and $\measuredangle a'v_2v_3> \frac{\pi}{3}$.
	This means that 
	
	\[
	\measuredangle v_1v_2v_3 = \measuredangle v_1v_2a + \measuredangle av_2a' + \measuredangle a'v_2v_3 > \pi
	\]
	
	and similarly $\measuredangle v_3v_4v_5 > \pi$.
	
	Now we will distinguish two extremes at $v_3$.
	In the first extreme position $d(c)$ is placed almost on top of $d(v_3)$, s.t., $\measuredangle v_2v_3c = \measuredangle cv_3v_4 = \alpha$ and $\alpha$ tends toward $\frac{\pi}{2}$ as $|v_3c|$ approaches 0.
	The extreme placement with $|v_3c|$ equal to 0 is shown in \cref{fig:bend_lemma_proof_extreme2}.
	This results in $\measuredangle v_2v_3v_4 > \pi$.
	To reach the second extreme (cf.~\cref{fig:bend_lemma_proof_extreme1}) $d(c)$ is moved outwards.
	As $|v_3c|$ increases, it facilitates a local outward bend and an angle $\measuredangle v_2v_3v_4 > \pi - \varphi$, where $\varphi = \cos^{-1}\left(\frac{|v_2c|^2 + |v_2v_3|^2 - |v_3c|^2}{2|v_2c||v_2v_3|}\right)$, see \cref{fig:bend_lemma_proof_v3}.
	However, in order to maximize this bend, $v_2$ and $v_4$ are moved into the resulting gap between $v_3$ and $c$, s.t., the distances $|v_2c|$ and $|v_4c|$ are equal to $2 + \varepsilon'$, where $\varepsilon'$ is a very small constant.
	In particular we have $\measuredangle a'v_2c > \frac{\pi}{3}$, and similarly $\measuredangle cv_4b> \frac{\pi}{3}$.
	Finally this leads to the following values:
	
	\begin{align*}
	\measuredangle v_1v_2v_3 &= \measuredangle v_1v_2a + \measuredangle av_2a' + \measuredangle a'v_2c + \measuredangle cv_2v_3 \geq \frac{\pi}{3} + \frac{\pi}{3} + \frac{\pi}{3} + \varphi = \pi + \varphi\\
	\measuredangle v_3v_4v_5 &= \measuredangle v_3v_4c + \measuredangle cv_4b + \measuredangle bv_4b' + \measuredangle b'v_4v_5 \geq \varphi + \frac{\pi}{3} + \frac{\pi}{3} + \frac{\pi}{3} = \pi + \varphi
	\end{align*}
	
	Clearly the overall bend towards the inside is minimized, if $\varphi$ and therefore $|v_3c|$ is minimal, i.e., we are as close as possible to the first extreme state.
	Since $|v_3c|>0$, we have $\measuredangle v_2v_3v_4 > \pi$ and finally
	
	\[
	\gamma = \measuredangle v_1v_2v_3 + \measuredangle v_2v_3v_4 + \measuredangle v_3v_4v_5 > 3\pi
	\]
	
	Note that the second extreme position can also be achieved by using more overlap between $d(v_2)$ or $d(v_4)$ and one of their extension neighbors, however, this configuration forces $\measuredangle av_2v_3$ and $\measuredangle v_3v_4b'$ to approach $\pi$ and therefore results in a similar overall inwards bend by $\varphi$, as shown in \cref{fig:bend_lemma_proof_extreme3}.
	
\end{proof}

We can also add a third extension neighbor to an outer vertex of a ladder to force an even more pronounced bend towards the inside.
The corresponding graph construction is shown in \cref{fig:red_outer_graph_b} and a valid \udr in \cref{fig:red_outer_graph_d}.
\begin{lemma}\label{lem:bend_outer_corner}
	Let $(v_1, v_2, v_3)$ be a chain of outside vertices of a ladder, s.t., $v_2$ has degree~5, i.e., 3 extension neighbors. Then the angle $\measuredangle v_3v_2v_1$ is smaller than $\frac{2\pi}{3}$, i.e., overall we bend at least $60^\circ$ towards the inner side.
\end{lemma}
\begin{proof}
	Note that the argument for the inward directed bend at $v_2$ and $v_4$ in the proof of Lemma~\ref{lem:bend_outer} was simply based on the number of extension neighbors.
	By increasing that number to 3, the lemma immediately follows.
\end{proof} 

We call this construction a corner connector.
The corner connector contains a 5-cycle, however similar to a 4-cycle, it is impossible to place a disk inside such a cycle without creating wrong adjacencies through overlap.

Finally we want to state some measure for the height of a \udr of an augmented $L_k$
\begin{lemma}\label{lem:ladder_height}
	The height the smallest bounding rectangle of a \udr of an augmented ladder $L_k$ is at least $2\sqrt{3}+2 - \varepsilon''$, where $\varepsilon''$ is a very small constant, which is the height of the smallest bounding box of an optimal packing of three rows of unit disks minus $\varepsilon''$. 
\end{lemma}
\begin{proof}
	The distance between the extension neighbors and the outer disks of the ladder has been argued in the proof of Lemma~\ref{lem:bend_outer}.
	Recall that the disks of the outside vertices of the ladder with a single extension neighbor are almost completely covered by it.
	Let $d(v_k)$ be such a covered disk.
	Its adjacent disk $d(v_k')$ on the inside of the ladder can therefore only overlap a very small amount with $d(v_k)$.
	Since $d(v'_k)$ is disjoint from $d(v_{k-1})$ and $d(v_{k+1})$, the extreme position (see \cref{fig:bend_lemma_proof_extreme2}) results in the smallest overall height of the ladder, which is the height of the dense packing of three rows of unit disks, i.e., $2\sqrt{3}+2$ minus the small amount $\varepsilon''$ which is attributed to the small possible overlap of $d(v_k')$ and $d(v_k)$.
\end{proof}

Now we have all necessities to describe the construction of the $7$-stable approximations of a long thin rhombus (Lemma~\ref{lem:rhombus_approx}) and a regular hexagon (Lemma~\ref{lem:hexagon_approx}).
Two ladders $L_k$ and $L_k'$ are placed opposite each other and connected on one end with three corner connectors as shown in \cref{fig:red_outer_rhombus}.
Note that the horizontal distance between $c_0$ and $v_1$ is four.
\rhombusouter*
\begin{proof}
	We can assume that $v_0 = (0,0)$ and that $c_0=(-2,0)$.
	We define a point $l_0 = (2k + 4, 0)$ on the positive $x$-axis with.
	
	We place a congruent copy of the rhombus, s.t., the long diagonal aligns with $\overline{c_0l_0}$, and one corner point coincides with the leftmost point of $c_0$, which is at $(-3,0)$.
	At every point the upper and lower boundary of the rhombus has a vertical distance of $\delta \in [0, 3\sqrt{3}+1]$ to $\overline{c_0l_0}$.
	Due to \cref{lem:bend_outer,lem:bend_outer_corner}, the overall bend of $L_k$ and $L'_k$ has to be towards the inner side.
	In fact the position depicted in the \cref{fig:red_outer_rhombus_a} is the limit and would require an additional infinitesimal bend inwards to be valid.
	A consequence is that $c_0$ and any $a_i$, have a maximal vertical distance of $3\sqrt{3}$ and therefore any point on the boundary of any disk has at most a vertical distance of $3\sqrt{3}+1 \approx 6.196 < 7$ to $\overline{c_0l_0}$.
	
	Clearly the \udrs of $L_k$ and $L'_k$ are completely overlap-free and therefore a tight packing as indicated in \cref{fig:red_outer_rhombus_b} is the narrowest possible configuration (including a small gap between the ladders).
	
	Now consider, that the smallest rectangular bounding box of the disks of the of outer vertices $v_1, \dots, v_k$ and $v_{-1}, \dots, v_{-k}$ (as labeled in \cref{fig:red_outer_rhombus_a}) and their extension neighbors have a combined height of at least $3\sqrt{3}+2$, which is the height of an optimal packing of four rows of unit disks.
	Therefore, at every point, there is part of the outline of $G^R_k$ on or above, and on or below ${c_0l_0}$ and the distance of the boundary of the union of all disks and the boundary of the rhombus is at most the distance of the rhombus itself to $\overline{c_0l_0}$, which as stated before, is $3\sqrt{3}+1 < 7$.
\end{proof}

Next we will create a $7$-stable approximation $G^H_k$ of a regular hexagon by tracing the outline of a hexagon with ladders and corner connectors as shown with the blue and the red disks in \cref{fig:red_complete_hex}.
Every side of the hexagon is represented by a $L_{k-3}$.
Let $L^1_j, L^2_j$ be two ladders, and let $v^k_i$ and $u^k_i$ be the $i$-th outer and inner vertices of $L^k_j$ respectively. 
At a corner, where $L^1_j$ and $L^2_j$ meet, we connect $u^1_j$ and $u^2_1$ with an edge.
Further, we connect the vertex $v_c$ of the corner connector with $v^1_j$ and $v^2_1$, thereby connecting the two ladders with a corner connector. The construction is depicted in \cref{fig:red_outer_graph_b}.
Six ladders are connected in this fashion through the use of five corner connectors, leaving the first and the last ladder disconnected.
This forms the outer hull of the hexagon, shown in \cref{fig:red_complete_hex}.

Since bends towards the outer side are impossible, we next need to control the possible deformation of a \udr of our construction towards the inner side.
To achieve this, we add ladders to our construction on the inside of the hull, reminiscent of our construction of $G^R_k$.
We chose the size of the hexagon, s.t., the inner area of a \udr of the hull admits the placement of $2m$ ladders ($m\in\mathbb{N}$ on the top and on the bottom, each) and that in a \udr, which is an optimal packing of disks, there is no space to add further disks in the gaps, without creating false adjacencies.

\hexagonapprox*
\begin{proof}
	By combining Lemmas~\ref{lem:bend_outer} and~\ref{lem:bend_outer_corner}, we conclude that a \udr of a such constructed outer hull does not exceed the boundary of a translated and/or rotated copy of the hexagon outlined as a dashed line in \cref{fig:red_complete_hex}, which has a side length of $2k-1$.
	
	Some movement towards the inside is possible.	
	By moving all ladders on the inside in the top half down by a complete row, including the ladder tracing the upper horizontal edge of the hexagon, we again arrive at an optimal packing of pairwise disjoint ladders, which contribute (due to \cref{lem:ladder_height}) at least their full height of $2\sqrt{3}+2 - \varepsilon''$ (again $\varepsilon''$ is a very small constant value) to the height of any \udr of the construction and theoretically, there is space to fold the right outer arm inward as depicted in \ref{fig:red_complete_hex_compact}.
	Even though the depicted folding of this outer arm into the optimal packing position is not possible, due to the construction, for increasingly long outer arms, we can get arbitrarily close to this packing.
	Therefore we analyze the (unreachable) worst case of the optimal packing, in which the farthest point from the boundary of the hexagon has a distance of $3\sqrt{3} + 1 + O(k)\cdot\varepsilon'' < 3\sqrt{3} + 1 < 7$, we conclude that no point on the boundary of the union of all disks in the \textsf{UDR} of a $G^H_k$ has a larger distance than 7 to a point on the boundary of the green dashed hexagon.
\end{proof}

\begin{figure}
	\centering
	\begin{subfigure}[t]{\linewidth}
		\centering
		\includegraphics[page=8, width=.75\linewidth]{figures/red_outer_hexagon.pdf}
		\subcaption{Compact state of $G^H_k$}
		\label{fig:red_complete_hex_compact}
		\vspace*{.2cm}
	\end{subfigure}
	\begin{subfigure}[t]{\linewidth}
		\centering
		\includegraphics[page=9, width=.75\linewidth]{figures/red_tree_hexagon.pdf}
		\subcaption{Compact state of $T^H_k$}
		\label{fig:red_complete_hex_compact_tree}
	\end{subfigure}
	\caption{In the most compact position, the \udr of $G^H_k$ and $T^H_k$ can -- similar to the rhombus -- fold all inner arms to one side (here to the lower side), and theoretically, there is space to fold the outer arms inward as depicted.
			The \udr of the arms are however again pairwise overlap free and the largest reachable distance any point on the boundary of the union of disks can have from the approximated regular hexagon is $3\sqrt{3} + 1$.
}
\end{figure}

\subsection{Proof of \cref{thm:tree_hardness}}\label{sec:app:tree_hardness}
\treehardness*
\begin{proof}
By modifying the construction of the previous section, we can prove that given a tree $T$ and an embedding $\mathcal{E}(T)$ it is \NP-hard to decide if $T$ admits a \textsf{UDR} $\mathcal{D}$, s.t., $\mathcal{E}_\mathcal{D}(T) = \mathcal{E}(T)$.
In particular, we construct two graphs $T^R_k$, $T^H_k$, which are both trees and $7$-stable approximations of a long thin rhombus and a regular hexagon respectively.

In the previous proof, we needed to construct ladders in a very specific way to force the placement of the disks of extension neighbors to one side.
This was done by adding the inner vertices, which made the graphs outerplanar.
Since in this setting the embedding is fixed, we can remove the inner vertices, which leaves the outer vertices and the extension neighbors.
Observe that the leftover graph is a tree.
The same holds for the corner connector.

Lemma~\ref{lem:bend_outer} and Lemma~\ref{lem:bend_outer_corner} can be directly applied to this new construction, since no assumption was made about the placement of the inner vertices outside of their function of forcing the disks of the extension neighbors to be placed on the outside.
This is now guaranteed by the embedding that is given as input.

To construct $T^R_k$, remove the inner vertices of a $G^R_k$.
Now the space between the two arms is large enough to accommodate one of the arms folding in on itself.
To prevent this, we add a third arm to the construction, as shown in \cref{fig:red_tree_rhombus_a}.
Recall that the outer arms of the construction cannot bend farther outwards than to the horizontal position in \cref{fig:red_tree_rhombus_a}.
Now again, no singular disk can be fit into the gap between the arms.

Next note that even without the inner vertices of the ladders, the three arms of a $T^R_k$ must together always be at least as high as an optimal packing of six rows of disks, which results in a height of $5\sqrt{3}+2$.
Therefore, at any given horizontal point, some part of the boundary of the union of all disks in a \udr of $T^R_k$ is either on or above $\overline{c_0l_0}$.
Since any point on the approximated rhombus is at most $3\sqrt{3}+1 < 7$ away from this line, $T^R_k$ is a $7$-stable approximation of a long thin rhombus.

We perform a similar adaption of $G^H_k$ to obtain $T^H_k$, a $7$-stable approximation of a regular Hexagon.
Analogously to $T^R_k$ the outer hull, which traces the hexagon is the same as for $G^H_k$, with all inner vertices removed from the ladders.
The relative placement of disks of extension neighbors to the outside is fixed, by the given embedding.
Therefore we can always place a congruent translated and/or rotated copy of the hexagon we want to approximate over the \udr of the construction shown in \cref{fig:red_complete_hex_tree}, s.t., the \udr is completely contained.

The exact placement of the inner ladder is slightly different as before.
We still fill the inside in such a manner, that the space between all upper and lower inner ladders does not admit the placement of an additional disk in a \udr, if realized as an optimal packing (cf. \cref{fig:red_complete_hex_tree}), however this requires an asymmetric construction, which places one additional ladder on one of the two sides, here to the bottom (similar to the construction of $T^R_k$).
Movement to the inside can be analyzed with a similar approach as before, by considering the closest placement to the inside of any of the arms, resulting in an optimal packing of the entire upper half of the construction, as seen in \cref{fig:red_complete_hex_compact_tree}.
Again this unobtainable worst case scenario leads to a maximal distance of $3\sqrt{3} + 1 < 7$ between the outline of any \udr of $T^H_k$ and the hexagon.

We therefore obtain again $5$-stable approximations of a long thin rhombus and a hexagon and hardness follows in the same fashion as for \cref{thm:outer_hardness}.
\end{proof}

\section{Omitted Details of \cref{algo-section:caterpillar}}
\subsection{Proof of Lemma~\ref{lem:claim_algo}\label{sec:proof_characterization}}

\claimalgo*
\begin{proof}
	
	Let $u_1, u_2, x$ (resp. $v_1,v_2, x'$) be the first, second and third neighbor of $u$ (resp. $v$) in counterclockwise/clockwise order, respectively. 
	We assume a placement of $d(u_1), d(u_2), d(v_1)$ and $d(v_2)$ as depicted in \cref{fig:claim_algo}.
	Since $x$ is adjacent to $u$ and not adjacent to $u_2$ and $v$, the possible space of placing the center of $d(x)$ is restricted to a small open fan-shaped area (see \cref{fig:claim_algo_c}).
	A symmetric area exists for $d(x')$ as well.
	Moreover, all three pairs of points with a distance of at least two, namely \cref{fig:claim_algo_d,fig:claim_algo_f} and a symmetric version of \cref{fig:claim_algo_f}, are on the boundaries of these areas.
	Therefore no such pair exists in the open fan-shaped areas.
	Note that a similar argument holds for the lower side and any rotation of $d(u_1), d(u_2), d(v_1)$ or $d(v_2)$ will decrease the area on one side. This completes the proof.
\end{proof}

\begin{figure}
	\centering
	\begin{subfigure}[t]{.32\linewidth}
		\centering
		\includegraphics[page=2, width=.95\linewidth]{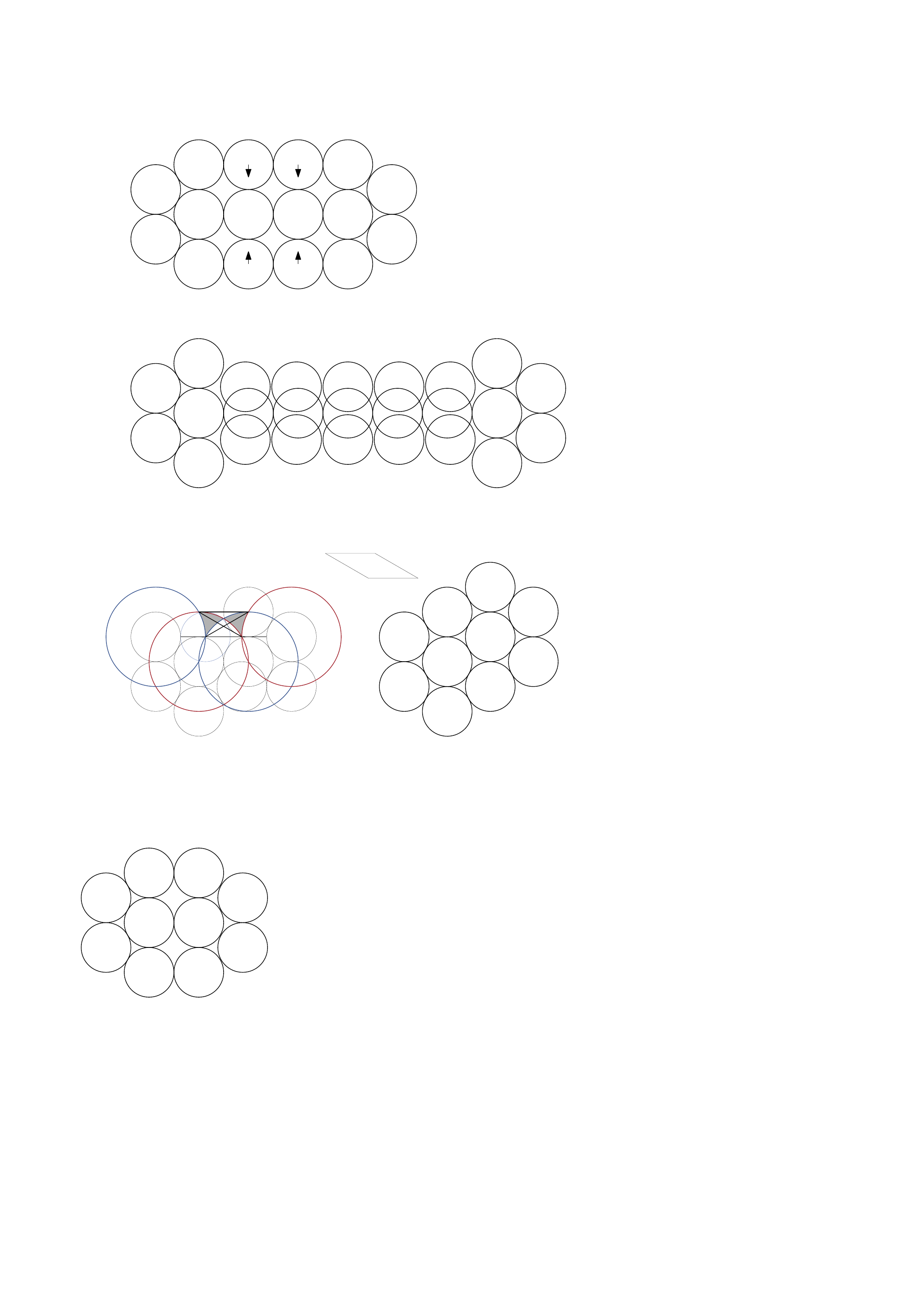}
		\caption{Two disks to be placed}
		\label{fig:claim_algo_a}
	\end{subfigure}
	\begin{subfigure}[t]{.32\linewidth}
		\centering
		\includegraphics[page=3, width=.95\linewidth]{figures/55contradiction.pdf}
		\subcaption{$|xu_2|, |xv| > 2$}
		\label{fig:claim_algo_b}
	\end{subfigure}
	\begin{subfigure}[t]{.32\linewidth}
		\centering
		\includegraphics[page=4, width=.95\linewidth]{figures/55contradiction.pdf}
		\subcaption{$|xu| \leq 2$}
		\label{fig:claim_algo_c}
	\end{subfigure}
	\begin{subfigure}[t]{.32\linewidth}
		\centering
		\includegraphics[page=5, width=.95\linewidth]{figures/55contradiction.pdf}
		\caption{First extreme}
		\label{fig:claim_algo_d}
	\end{subfigure}
	\begin{subfigure}[t]{.32\linewidth}
		\centering
		\includegraphics[page=6, width=.95\linewidth]{figures/55contradiction.pdf}
		\subcaption{$\Parallelogramm uvpu_2$ is a rhombus}{-1}
		\label{fig:claim_algo_e}
	\end{subfigure}
	\begin{subfigure}[t]{.32\linewidth}
		\centering
		\includegraphics[page=7, width=.95\linewidth]{figures/55contradiction.pdf}
		\subcaption{Second extreme}
		\label{fig:claim_algo_f}
	\end{subfigure}
	\caption{Visualization of the proof steps of Lemma~\ref{lem:claim_algo}.
		The area highlighted in orange illustrates possible placements for the center of $d(x)$, which is progressively restricted by its adjacent and non-adjacent neighbors.}
	\label{fig:claim_algo}
\end{figure}

\subsection{Correctness}\label{sec:app:algo_correct}
If a caterpillar contains consecutive degree 5 vertices we reject it as it has no \udr.
In any other case, the algorithm above can represent it in a way, s.t., at any point the angle between a backbone vertex and its left most-top and left-most bottom leaf is less than $\pi$. As long as this property holds, we can always add a new backbone vertex.
Moreover, if the sequence is extended by a backbone vertex of at most degree 4, this property immediately holds again.
If the sequence is extended by a backbone vertex of degree 5, a backbone vertex of degree at most 4 must follow. 
Once we place this degree~4 vertex appropriately, cf. \cref{fig:caterpillar_construction_b}, then the property immediately holds again.
\section{Omitted Details of Section~\ref{algo-section:lobster}}

\dynprogB*
\begin{proof}
	The vertex $v$ can only be placed at one of three possible grid positions.
	For each position $p$ at which $v$ can be placed, every placement of $v$ and its descendants corresponds to a subset of the constant size set $S(p)$, which yields again a constant number of placements.
\end{proof}

\dynprogD*
\begin{proof}
    Clearly the number of grid positions inside a circle of radius 2 are constant.
    Therefore only a constant number of previously placed backbone vertices could have possibly been placed a this grid position or placed one of their descendants on it.
    Since the \wudc is strictly $x$--monotone, every backbone vertex must be placed to the right of the previous one.
    In particular that means that only a constant number of backbone vertices can have a distance smaller or equal to 2 in $x$ direction to the grid position and moreover, the graph distance between the first and last such vertex has to be constant.
    Since every graph, which possible can occupy the grid position has to have an $x$-distance of less or equal to 2, this concludes the proof.
\end{proof}

\dynprogE*
\begin{proof}
    Let $U$ be the \wudc of $A$ and $C$.
    Now we take the subpart of $U$ induced by vertices of $C$.
    Assume w.l.o.g., that the disk of the first backbone vertex of $C$ was placed one grid position to the right of the disk of the last backbone vertex of $B$.
    We now translate this part, s.t., the disk of the first backbone vertex of $C$ is placed one grid position to the right of the last backbone vertex of $B$.
    Since $B$ and $A$ had the exact same set of already occupied grid positions, this is a valid \wudc of $B$.
\end{proof}

\dynprogmonotone*
\begin{proof}
    The dynamic program starts at the first backbone vertex $V_1$, which is w.l.o.g., placed at a fixed position and enumerates all possible placements of the descendants of $v_1$.
    Due to \cref{obs:dyn_prog_2}, this can be done in $O(1)$ time.
    Every single placement yields a pattern of occupied positions on the grid.
    Note that two placements, which yield the same occupation pattern and place $v_1$ at the same position relative to that pattern (up to translation and rotation) are indistinguishable, when considering the obstruction they pose for the rest of the \wudc, due to \cref{obs:dyn_prog_5}.
    We save all possible occupation patterns together with the position of the previously placed backbone vertex as a record.
    After processing the first backbone vertex, we therefore clearly have a constant number of records.
    
    When the dynamic program continues on from the $(i-1)$-th to the $i$-th backbone vertex $v_i$ it needs to place the new backbone vertex at a specific position.
    Every such position $p$ has a constant set of grid positions, which are at most at a grid distance of 2 to $p$.
    Note that these grid positions can be occupied by descendants of previous backbone vertices, however, due to \cref{obs:dyn_prog_3} these backbone vertices have at most a constant distance in the graph to $v_i$.
    Therefore the dynamic program only needs to remember the exact occupation pattern for a constant number of previously placed backbone vertices.
    
    In particular this means, that for all possible occupation patterns (a constant number per backbone vertex) of all relevant previous backbone vertices (a constant number) we need to save all possible placements (a constant number per backbone vertex and occupation pattern).
    We call one such saved combination, together with the position of the previously placed backbone vertex, a \emph{record} and such a record is clearly of constant size.
    
    For every single record, we need to enumerate all possible placements of the descendants of $v_i$.
    Due to \cref{obs:dyn_prog_2}, this is also a constant number.
    Finally we can check if the placement is valid for this record in constant time.
    For every record, we now save for every possible valid placement of $v_i$ and its descendants only the information of the occupied grid positions for $v_i$ and the required constant number of predecessors of the backbone.
    The crucial observation is that we do not need to differentiate between two records, which are equivalent as defined in \cref{obs:dyn_prog_5}.
    If we do consider records, whose last $C$ backbone vertices induce the exact same pattern and whose last backbone vertex was placed at the exact same position (up to translation), then it is clear that only a constant number of unique records exist for a single backbone vertex.
    In particular, $v_{i+1}$ starts only with a constant number of records to check.
    This results in only an overall constant time to process a single backbone vertex.
\end{proof}

\section{Computer Assisted Proof of \cref{lem:x_mon_lobster}}\label{sec:app:automated}
Here we aim to provide a detailed overview of our computer assisted approach of the induction step for the proof of \cref{lem:x_mon_lobster}.
Recall that the induction hypothesis is as follows.
Any lobster graph $G$, with a backbone of length $k$ admits a strictly $x$-monotone \wudc.
In the induction step we need to prove that any extension to a lobster graph $G'$ with a backbone of length $k+1$ also admits an $x$-monotone \wudc.
Let $v_1, \dots, v_{k+1}$ be the vertices of $B_{G'}$.
Note that $v_{k+1}$ needs to be placed, s.t., it is in contact with $v_k$ and, in a strictly $x$-monotone \wudc, $v_{k+1}$ can be placed at one of three possible grid positions $p_1, p_2, p_3$, while in a \wudc it can be placed at one of six possible grid positions $p_1, \dots, p_6$ (at least one of which will already be occupied by $v_k$).
If we can find an instance of a graph $G'$, s.t., $v_{k+1}$ is forced to be placed at one of the positions $p_4, p_5, p_6$ in order to realize a \wudc of $G'$, we know that the Theorem would be false.
The crucial observation is that we can enumerate for a given graph $G$ all possible extensions $G'$, since the maximum degree of every vertex is naturally bounded by six.
Moreover the exact structure of $G$ is not relevant, when investigating if $G'$ admits a \wudc.
But rather, we want to know to what extent the already placed disks of $G$ can interfere with a placement of disks belonging to $v_{k+1}$ and its neighbors.

\subsection{Enumerating All Combinations of Free Grid Positions}
Note that the set of grid positions, which can be used to place $v_{k+1}$ or one of its neighbors is a constant size set of 29 positions, see \cref{fig:app:grid_pos_set}.

\begin{figure}
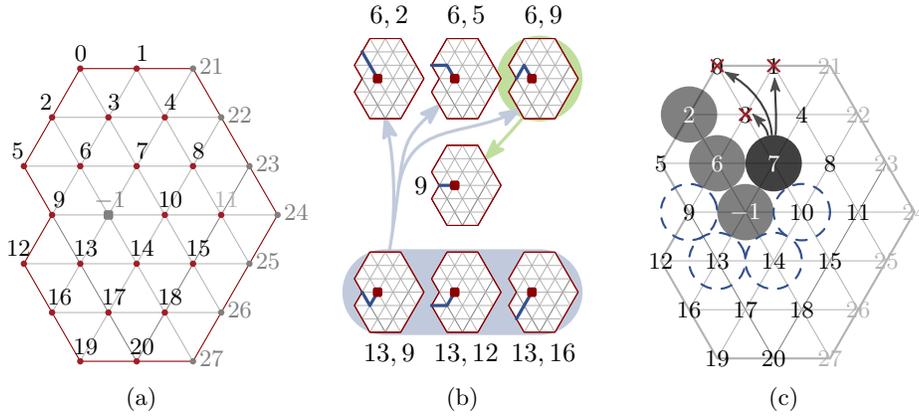

    \centering
    \begin{subfigure}[t]{.3\linewidth}
        \centering
        \includegraphics[page=1]{figures/Lobsters/automated_appendix_details.pdf}
        \subcaption{}
        \label{fig:app:grid_pos_set}
    \end{subfigure}
    \hfill
    \begin{subfigure}[t]{.3\linewidth}
        \centering
        \includegraphics[page=2]{figures/Lobsters/automated_appendix_details.pdf}
        \subcaption{}
        \label{fig:app:backbone_paths}
    \end{subfigure}
    \hfill
    \begin{subfigure}[t]{.3\linewidth}
        \centering
        \includegraphics[page=12]{figures/Lobsters/automated_appendix_details.pdf}
        \subcaption{}
        \label{fig:app:single_impl}
    \end{subfigure}
    \caption{
    	(a) A constant size set of grid positions, which could be occupied by $v_{k+1}$ or one of its neighbors for a fixed position of $v_k$ (at $-1$) and a possible placement of $v_{k+1}$ at either 7, 10 or 14.
    	Positions, which are necessarily free (assuming an already $x$-monotone placement of $v_1$ up to $v_k$) are $21, \dots, 27$.
    	(b) All cases, how the previous backbone could be connected to the position of $v_k$.
    	The numbers indicate the occupied positions of the previous backbone vertices in each case.
    	The top right case is covered, because the center case is a subset of it.
    	The bottom three cases are covered by symmetry.
	}
    \label{fig:app:proof_details}
\end{figure}

We assume w.l.o.g. that $v_k$ was placed at grid position $-1$.
The positions we need to consider, when looking at obstructions is a smaller set than the full 29, as positions $21,\dots,27$ can not possibly be occupied, since they have a minimal grid distance of three to any possible previous backbone vertex position (Note that we denote the grid distance between two neighboring grid positions as one, even though the actual distance has to be two to accommodate the unit disks).
Further we can assume $-1$ to be blocked as $v_k$ is placed at this position

Next we can assume that the grid position $-1$ has to be connected to the left hand side of the area outlined in red in see \cref{fig:app:grid_pos_set}, as there must be at the very least an incoming path of backbone vertices.
Every such $x$-monotone path on the grid starting at any point to the left of $-$ on the boundary of the red region and ending at $-1$ has a maximum length of three.
Note that our base cases cover all backbone lengths up to a length of three.
Various possibilities exist for such a path, however we only need to consider three, see \cref{fig:app:backbone_paths}.
The top right case is covered, because it is indistinguishable from the center case, when a single neighbor of either $v_k$ or $v_{k-1}$ is placed at position 6.
The bottom three cases are simply symmetric to the top three cases.

This now leaves us with two cases with a set of 19 possibly occupied grid positions and one with 20.
From here on out we assume that we represent these grid positions as three lists of boolean variables $L_C, C\in \{\{6,2\}, \{6,5\}, \{9\}\}$ of length 21, where $L_c[i]$ ($\neg L_c[i]$) means that the position is (not) occupied.
Clearly some of these positions are fixed to be true, e.g., $c\in C \implies L_C[c]$, i.e., the grid positions, which are occupied by disks of the previous backbone vertices are guaranteed to be blocked.
Moreover, some of the positions are also guaranteed to be not occupied.
For example the grid positions 19 and 20 do not have any path to either $-1$, 2, 6, or any possible position of a previous backbone disk - recall that the realization is $x$-monotone up to $v_k$ - of grid distance 2 or shorter and can therefore not be occupied by a previously placed disk.
All possible combinations of the values in these lists, still leaves us with two times $2^{19}$ plus $2^{20}$ (since $L_{\{6, 2\}}$ and $L_{\{6, 5\}}$ have two blocked spots and $L_{\{9\}}$ has one) cases.

We can reduce the number even further when considering the implication of one true value on all other values in a list $L_C$.
For this we first set every singular entry in a list to true and consider its implications.
One example is shown in \cref{fig:app:single_impl}, where we consider the implications of setting $L_{\{6, 2\}}[7]$.
Since $7$ is occupied, the next backbone disk can only be placed at positions 9, 10, 13 or 14. Note that positions 0, 1 and 3 do not have a free path (that is a sequence of free grid position), s.t., the start and end point of this path have grid distance two to either one of these positions.
Therefore, They are unusable to any descendent of $v_{k+1}$ in a \wudc.
We can consider them occupied.
This implication can be formulated as 

\[ L_{\{6, 2\}}[7] \implies L_{\{6, 2\}}[0] \land L_{\{6, 2\}}[1] \land L_{\{6, 2\}}[3]\]

We can manually consider every single position in the grid and their implications on all other positions in each of the three scenarios.
All such implications are visualized as an implication matrix in \cref{fig:app:Implications_single_1,fig:app:Implications_single_2,fig:app:Implications_single_3}.

\begin{figure}
	\centering
	\begin{subfigure}{\linewidth}
	    \centering 
	    \includegraphics[page=6, width=.87\linewidth]{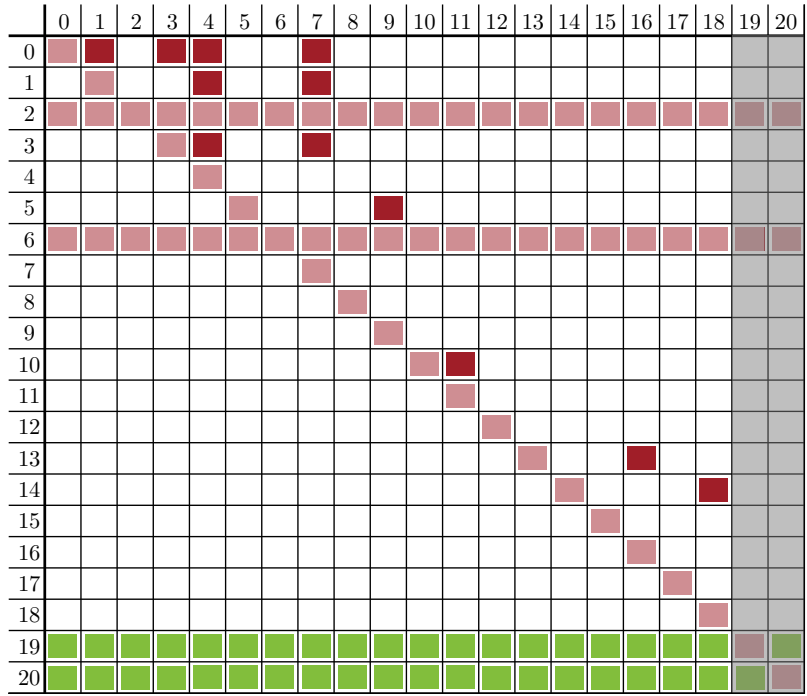}
	    \subcaption{Implications of singular grid positions}
    	\label{fig:app:Implications_single_1}
	\end{subfigure}
	\begin{subfigure}{\linewidth}
	    \centering 
	    \includegraphics[page=9, width=\linewidth]{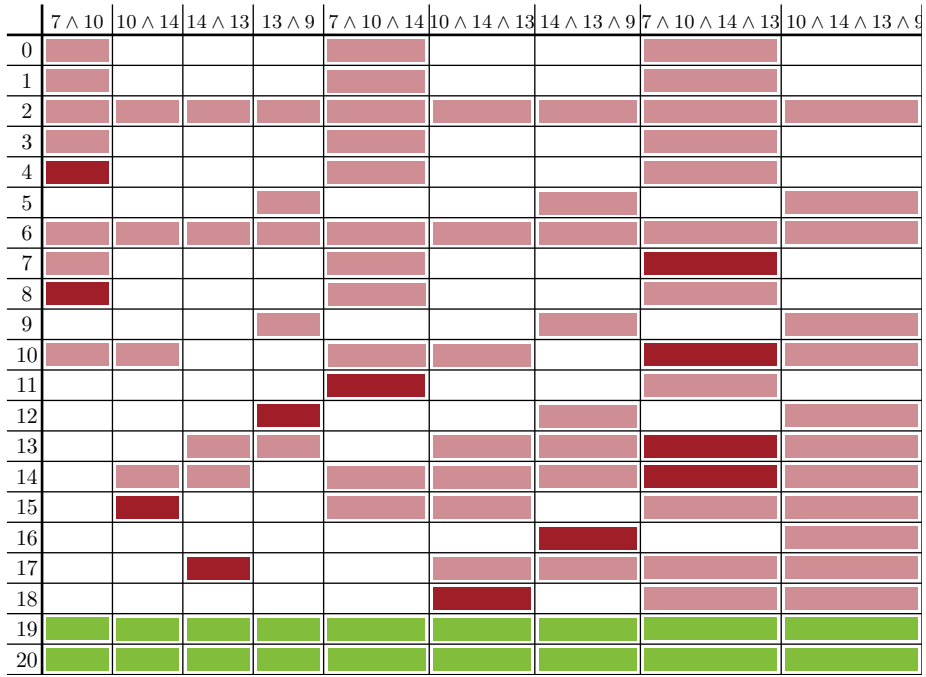}
	    \subcaption{Implications of sets of grid positions}
    	\label{fig:app:Implications_set_1}
	\end{subfigure}
	\caption{Implications of singular grid positions being occupied on all other grid positions in $L_{\{6, 2\}}$.
	Red blocks indicate, the column number implies the row number.
	Light red blocks indicate a trivial or previously determined implication.
	Green blocks indicate guaranteed free positions.
	Grey overlay indicates that the implication of these columns was not considered (as they are free).
	}
\end{figure}

\begin{figure}
	\centering
	\begin{subfigure}{\linewidth}
	    \centering 
	    \includegraphics[page=7, width=.87\linewidth]{figures/Lobsters/automated_appendix_details.pdf}
	    \subcaption{Implications of singular grid positions}
	    \label{fig:app:Implications_single_2}
	\end{subfigure}
	\begin{subfigure}{\linewidth}
	    \centering 
	    \includegraphics[page=10, width=\linewidth]{figures/Lobsters/automated_appendix_details.pdf}
	    \subcaption{Implications of sets of grid positions}
    	\label{fig:app:Implications_set_2}
	\end{subfigure}
	\caption{Implications of singular grid positions being occupied on all other grid positions in $L_{\{6, 5\}}$.
	Red blocks indicate, the column number implies the row number.
	Light red blocks indicate a trivial or previously determined implication.
	Green blocks indicate guaranteed free positions.
	Grey overlay indicates that the implication of these columns was not considered (as they are free).}
\end{figure}

\begin{figure}
	\centering
	\begin{subfigure}{\linewidth}
	    \centering 
	    \includegraphics[page=8, width=.87\linewidth]{figures/Lobsters/automated_appendix_details.pdf}
	    \subcaption{Implications of singular grid positions}
	    \label{fig:app:Implications_single_3}
	\end{subfigure}
	\begin{subfigure}{\linewidth}
	    \centering 
	    \includegraphics[page=11, width=\linewidth]{figures/Lobsters/automated_appendix_details.pdf}
	    \subcaption{Implications of sets of grid positions}
    	\label{fig:app:Implications_set_3}
	\end{subfigure}
	\caption{Implications of singular grid positions being occupied on all other grid positions in $L_{\{9\}}$.
	Red blocks indicate, the column number implies the row number.
	Light red blocks indicate a trivial or previously determined implication.
	Green blocks indicate guaranteed free positions.
	Grey overlay indicates that the implication of these columns was not considered (as they are free).}
\end{figure}

We further considered the implications of some sets of grid positions, e.g., $L_{\{6, 2\}}[7] \land L_{\{6, 2\}}[10] \implies L_{\{6, 2\}}[4]$, i.e., in the $L_{\{6, 2\}}$ setting, if both 7 and 10 are occupied, 4 is unreachable and can therefore be considered occupied.
All such implications are again visualized in an implication matrix in \cref{fig:app:Implications_set_1,fig:app:Implications_set_2,fig:app:Implications_set_3}.

We then generate all possible combinations of Boolean variables for the three lists $L_{\{6, 2\}}$, $L_{\{6, 5\}}$ and $L_{\{9\}}$, but omit any combination, which contradicts any of the implications, we have identified, e.g., if any combination has 
\[L_{\{6, 2\}}[7] \land \neg L_{\{6, 2\}}[0]\]
then we can omit it and similar for all other implications.

\subsection{Enumerating all possible Extensions of $G$ to $G'$}
The next crucial observation is that the number of the neighbors of $v_{k+1}$ and the number of their neighbors is bounded by 5, i.e., $v_k{k+1}$ can have at most 5 neighbors (excluding $v_k$), which can have at most 5 neighbors themselves (excluding $v_{k+1}$).
Again we represent $v_{k+1}$ as a sorted list $L_v$ of length equal to the number of neighbors (excluding $v_k$), s.t., for every neighbor $v_n$ of $v_{k+1}$, there is an $i$, s.t., $L_v[i]$ is equal to the number of neighbors of $v_n$ (excluding $v_{k+1}$).
For example, the first backbone vertex of \cref{fig:intro_c}, would be represented as $[3, 2, 1]$.

Now it is clear that we can employ again an exhaustive enumeration to construct every sorted list of length 5 or lower, with all possible combination of values between $0$ and $5$.
We can however again identify some exclusion criteria, based on structures, which make a graph unrealizable outright.
These criteria are:
\begin{itemize}
    \item If the total number of vertices is larger than the not occupied entries in $L_c$, the graph is not realizable
    \item Depending on the number of neighbors (excluding $v_{k+1}$) we can identify, how many grid positions at grid distance 1 to the placement of $v_{k+1}$, a single neighbor will occupy
    \begin{itemize}
        \item If $L_v[i]$ equals 5, it will occupy at least 3 such spaces, (2 additionally)
        \item If $L_v[i]$ equals 4, it will occupy at least 2 such spaces (1 additionally)
        \item If there exists $i, i', i'', i'''$, s.t., $L_v[i], L_v[i'], L_v[i'']$ are equal or greater than 3 and $L_v[i''']$ is equal or greater than 2, these neighbors will together occupy at least 5 such spaces (1 additionally)
    \end{itemize}
\end{itemize}
\begin{figure}
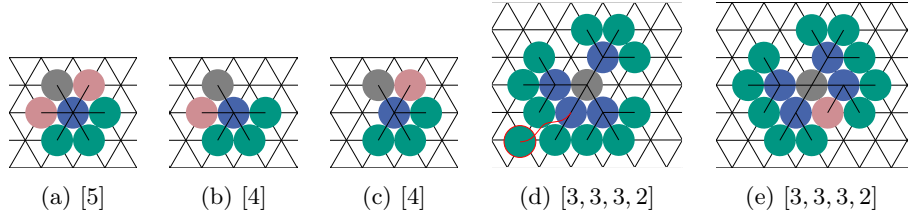

    \centering
    \begin{subfigure}[t]{.16\linewidth}
        \centering
	    \includegraphics[page=13, scale=0.75]{figures/Lobsters/automated_appendix_details.pdf}
	    \subcaption{$[5]$}
	    \label{fig:app:additional_slots_a}
    \end{subfigure}
    \hfill
    \begin{subfigure}[t]{.16\linewidth}
        \centering
	    \includegraphics[page=14, scale=0.75]{figures/Lobsters/automated_appendix_details.pdf}
	    \subcaption{$[4]$}
	    \label{fig:app:additional_slots_b}
    \end{subfigure}
    \hfill
    \begin{subfigure}[t]{.16\linewidth}
        \centering
	    \includegraphics[page=15, scale=0.75]{figures/Lobsters/automated_appendix_details.pdf}
	    \subcaption{$[4]$}
	    \label{fig:app:additional_slots_c}
    \end{subfigure}
    \hfill
    \begin{subfigure}[t]{.23\linewidth}
        \centering
	    \includegraphics[page=16, scale=0.75]{figures/Lobsters/automated_appendix_details.pdf}
	    \subcaption{$[3, 3, 3, 2]$}
	    \label{fig:app:additional_slots_d}
    \end{subfigure}
    \hfill
    \begin{subfigure}[t]{.23\linewidth}
        \centering
	    \includegraphics[page=17, scale=0.75]{figures/Lobsters/automated_appendix_details.pdf}
	    \subcaption{$[3, 3, 3, 2]$}
	    \label{fig:app:additional_slots_e}
    \end{subfigure}
    \caption{Additionally occupied grid positions at grid distance 1 from the position of $v_{k+1}$ (grey).
    A descendant, which is non-adjacent to $v_{k+1}$, but occupies a grid position at grid distance 1 to $v_{k+1}$ is colored in light red.
    The realization in (d) is not valid.
    The subcaptions are according to the list notation of the neighbors as used in the text.}
    \label{fig:additional_slots}
\end{figure}

If any of the first two criteria applies or the number of neighbors of $v_{k+1}$ plus the number of additionally occupied grid positions (indicated in parenthesis above) is greater than the number of free spaces at grid distance 1 to the chosen position of $v_{k+1}$, we can disregards this sorted list.

With these tools, we can now generate the set of instances $I = (\Gamma\times\Theta)\setminus \Xi$, where $\Xi$ is any combination of an element $\gamma \in \Gamma$ and $\theta \in \Theta$, s.t., either $\gamma$ or $\theta$ pr the combination of both could be excluded based on the criteria outlined above.
The size of $I$ is $|I| = 7017913$.

\subsection{Enumerating All Possible Realizations in a Given Setting}
Finally we use an exhaustive enumeration of possible placement of the descendants of $v_{k+1}$, for each of the three or six possible placements of $v_{k+1}$ to generate $\Delta_3$ and $\Delta_6$.
For this we consider a fixed position of $v_{k+1}$.
Then we consider every possible placement of its neighbors on the grid positions at grid distance 1 to $v_{k+1}$.
If no such placement exists, the setting is not realizable.
Next for a neighbor $v_n$ of $v_{k+1}$ we consider all placements of its neighbors (excluding $v_{k+1}$) at the still free grid positions.
If this does not exist for any $v_n$, the setting is not realizable.
Now we have multiple sets of possible realizations of the neighbors of every $v_n$.
We finally consider every combination of choosing one element out of each of these sets, s.t., no two disks occupy the same grid positions.
If no such combination exists, the setting is not realizable.
Otherwise the set of all combinations is a complete enumeration of all realizations of $v_{k+1}$ and all of its descendants.

\end{document}